\theoremstyle{plain}
\newtheorem{theorem}{Theorem}
\theoremstyle{definition}
\newtheorem{definition}{Definition}
\newtheorem{example}{Example}
\def\F{\mathbb{F}}
\def\bFq{\F_q}
\def\proj{\mathbb{P}}
\def\G{\mathbb{G}}
\def\C{\mathbb{C}}
\def\Z{\mathbb{Z}}
\newcommand\smallda{ \hskip0.04truein{\hbox{--}}\hskip0.06truein{\hbox{--}} \to}
\begin{document}

\title[Codes from Higher Dimensional Varieties]{Algebraic Geometry Codes
from Higher Dimensional Varieties\label{ChLittle}}

\author[John B. Little]{John B. Little}

\address{Department of Mathematics and Computer Science\\
College of the Holy Cross\\
Worcester, MA 01610 USA\\
{\tt little@mathcs.holycross.edu}}

\subjclass[2000]{Primary 94B27;
Secondary 14G50, 14J99}

\keywords{coding theory, Goppa code, quadric, Hermitian variety, Grassmannian, flag variety,
Del Pezzo surface, ruled surface, Deligne-Lusztig variety}

\begin{abstract}
This paper is a general survey of work on Goppa-type codes
from higher dimensional algebraic varieties.  The
construction and several techniques for estimating the minimum
distance are described first.  Codes from various classes of
varieties, including Hermitian hypersurfaces, Grassmannians,
flag varieties, ruled surfaces over curves, and Deligne-Lusztig
varieties are considered.  Connections with the theories of toric
codes and order domains are also briefly indicated.
\end{abstract}
\maketitle

\section{Introduction}\label{introsec}

The codes considered in this survey can all be understood
as examples of \emph{evaluation codes} produced from a finite
set ${\mathcal S} = \{P_1,\ldots,P_n\}$ of $\bFq$-rational
points on an algebraic variety $X$ and an $\bFq$-vector space of 
functions ${\mathcal F}$ defined on ${\mathcal S}$.  The set of codewords 
is the image of an evaluation mapping
\begin{eqnarray}\label{ev}
ev_{\mathcal S} : {\mathcal F} & \longrightarrow & \bFq^n\\
                f &\mapsto & (f(P_1),\ldots,f(P_n)).\nonumber
\end{eqnarray}
$X$ will usually be assumed smooth, but
in fact many of the constructions also make sense for 
normal varieties (much of the usual geometric
theory of divisors and line bundles on normal varieties
is the same as in the smooth case).  

The Goppa $C_L(D,G)$ codes from curves $X$ where ${\mathcal F} = L(G)$
for some divisor $G$ on $X$ were the first 
examples of codes of this type to be considered.  
Relatively early in the history of applications of 
algebraic geometry to coding theory, however, Tsfasman and Vladut 
proposed in Chapter 3.1 of \cite{tv91} that higher
dimensional varieties might also be used to construct 
codes.  By the results of \cite{psw91}, \emph{every} linear code 
can be obtained by the construction of Definition~\ref{sophis} below,
starting from some ${\mathcal S} \subseteq X(\bFq)$ for some variety $X$
and some line bundle ${\mathcal L}$ on $X$;  indeed \emph{curves} 
suffice for this (see Section \ref{bibnotes}).
Hence the question is whether one can identify specific 
higher dimensional varieties
$X$, spaces of functions ${\mathcal F}$, and sets 
of rational points ${\mathcal S}$ 
that yield particularly interesting codes
using algebraic geometric constructions.
There has been a fairly steady stream of articles
since the 1990's studying such codes and our first main goal 
here is to survey the methods that have been developed
and the results that have been obtained.  

In a sense, the first major difference between 
higher dimensional varieties and curves
is that points on $X$ of dimension $\ge 2$ are subvarieties
of codimension $\ge 2$, not divisors.  This means that
many of the familiar tools used for Goppa codes (e.g. 
Riemann-Roch theorems, the theory of differentials and
residues, etc.) do not apply in exactly the same way.

A second difference is the possibility of performing
\emph{birational modifications} such as blowing up points or
other subvarieties on a variety of higher dimension.  
For instance, if $p$ is a point in a smooth algebraic
variety $X$ of dimension $\delta \ge 2$,
there is another smooth variety $Y = {\rm Bl}_p(X)$, a
proper morphism $\pi : Y \to X$, and an \emph{exceptional divisor}
$E \simeq \proj^{\delta - 1}$ in $Y$ such that
$\pi(E) = \{p\}$, and $\pi|_{Y-E} : Y - E \simeq X - \{p\}$ as
varieties.  Because $Y$ and $X$ have isomorphic nonempty Zariski-open
subsets, they have isomorphic function fields.  
Such varieties $Y$ and $X$ are said to be 
\emph{birationally isomorphic}.
This says that function fields in two or more variables
always have many different nonisomorphic smooth models, and
the connection with function fields is not as tight
as in the curve case.

It must be said that the theory of Goppa-type
codes from higher dimensional varieties is much less advanced
at this point than the theory for Goppa codes from curves, perhaps
because of these differences.  There is still no clear understanding
of how best to harness the properties of higher dimensional
varieties in coding theory.
Indeed, as we will see, most of the work that has appeared
to date has been devoted to case studies of
the \emph{structural properties} of codes constructed from 
certain particular families of varieties $X$ -- their parameters, 
their weight distributions, their hierarchies of higher 
Hamming weights, and so forth.  A few general ideas for 
estimating the minimum distance $d$ have been developed.  However, 
quite a few of the codes that we will see are rather unremarkable; 
in many of the cases where the exact weight distributions are 
known, other algebraic constructions yield better codes. 
In addition, the development of efficient encoding and 
decoding algorithms for these codes has not really begun (see 
Section~\ref{bibnotes} on this point, though). 
The theory of order domains should yield tools 
here as well as for codes from curves.  Nevertheless, the universality
of this construction offers hope that good examples can
be constructed this way, and our second main goal is 
to encourage others to explore this area.  

This survey is organized as follows.  In Section~\ref{constr},
we give two variants of Tsfasman and Vladut's code construction,
one starting from an abstract variety $X$ and line bundle ${\mathcal L}$
on $X$, the other starting from an embedded variety $X \subset \proj^m$.
We also present some first examples.  Four general methods for 
estimating the minimum distance are presented in Section~\ref{params}.
Two appeared first in S.H.~Hansen's article \cite{sh01}.  For the 
first of these, it is assumed that all of the $\bFq$-rational
points of interest are contained in a family of curves on $X$
and intersection products of divisors with those curves are
used to bound $d$.  The second method is based on 
the Seshadri constant of the line bundle ${\mathcal L}$ with
respect to the set of $\bFq$-rational points on $X$.  A 
third method from \cite{gls05} 
can be used when the set of $\bFq$-rational points
is itself a complete intersection in $\proj^m$.  Finally, 
we present another, more arithmetic, method based on the 
Weil conjectures developed by Lachaud in \cite{l96}.

The next sections \ref{exs} and \ref{DLvars} present 
a selection of the examples of these codes that have appeared
in the literature, codes constructed from quadric hypersurfaces,
Hermitian hypersurfaces, Grassmannians and flag varieties, 
Del Pezzo surfaces, ruled surfaces, and Deligne-Lusztig
varieties.  Finally, we present some comparisons between
codes in Section~\ref{comps}.

Where practicable, we have provided brief proofs of the 
results we state, in order to show the methods involved
in the study of these codes.  

As we proceed through these examples, the 
prerequisites from algebraic geometry steadily increase.
Our intended audience includes both coding theorists familiar
with the theory of Goppa codes on curves but not higher dimensional
geometry and algebraic geometers curious about how higher dimensional 
varieties might be used in the coding theory context.  Hence there 
are probably portions of what we say that might seem
unnecessarily elementary to some readers.  
We apologize in advance.

The text \cite{har77} by Hartshorne is a good general reference
for most of the algebraic geometry we need.  The construction
of Grassmannians via exterior algebra, Schubert varieties, 
and the intersection theory on Grassmannians are covered in
Griffiths and Harris, \cite{gh78}.
A full understanding of the Deligne-Lusztig
varieties also depends on the theory of reductive algebraic groups $G$
over fields of characteristic $p$ and the classification
of their finite subgroups $G^F$
by root systems and Dynkin diagrams
with an action of the Frobenius endomorphism, $F$.
The book \cite{ca85} of Carter contains all the information
needed for this.     

Because of space limitations, it has not been possible
to discuss all the results of every paper in this area
in detail.  Pointers to all of the literature 
of which the author is aware are provided in the
bibliographic notes in Section~\ref{bibnotes}, 
the references, and their bibliographies.  

Any omissions or errors are entirely due to the author.  
Any comments or suggestions are welcome.

\subsection{Notation}

We will use the following general notational and terminological
conventions.

\begin{itemize}
\item The number of elements in a finite set ${\mathcal T}$ will be
denoted by $\# {\mathcal T}$.

\item The \emph{parameters} of a linear code are denoted $[n,k,d]$ as
usual, where $n$ is the block length, $k$ is the dimension,
and $d$ is the minimum distance.  

\item The \emph{generalized Hamming weights} are denoted $d_r$, $1 \le r \le k$.
As in \cite{wei91}, $d_r$ is the size of the minimal support of
an $r$-dimensional subcode of $C$, extending the usual minimum distance $d = d_1$.

\item We denote an algebraically closed field of characteristic
$p$ by $\F$ and all finite fields $\bFq$ for $q = p^m$ 
are considered as subfields of $\F$.  

\item The \emph{projective spaces} $\proj^m$, \emph{Grassmannians} $\G(\ell,m)$,
and so forth are considered as varieties over the algebraically
closed field $\F$ in order to ``do geometry.'' The $\bFq$-rational
points used in the construction of the codes are finite subsets
of these varieties.  

\item If $f$ is a homogeneous polynomial in $\bFq[x_0,\ldots,x_m]$, 
${\bf V}(f)$ is the zero locus of $f$ in $\proj^m$.  

\item A \emph{line bundle} is a locally free sheaf of rank one.  
At several points, it will be convenient to use the 
\emph{sheaf cohomology} groups $H^i(X,{\mathcal L})$ for 
a line bundle ${\mathcal L}$.  The space of global sections will 
also be written $\Gamma(X,{\mathcal L})$.  
\end{itemize}

\section{The General Construction}\label{constr}

Several apparently different, but essentially equivalent,
versions of the construction are commonly encountered in
the literature.  For instance, one description starts from a smooth
projective variety $X$ defined over $\bFq$, a set
${\mathcal S} \subseteq X(\bFq)$ of $\bFq$-rational points of $X$,
and a line bundle ${\mathcal L}$ on $X$, also defined over $\bFq$.  
Let $P$ be an $\bFq$-rational point of $X$.  The stalk 
${\mathcal L}_P$, modulo sections vanishing at $P$, denoted
$\overline{{\mathcal L}}_P$, is isomorphic to $\bFq$ by a choice
of local trivialization.  

\begin{definition}\label{sophis}
The choice of such local trivializations at each point in ${\mathcal S}$ 
defines a linear mapping (called the \emph{germ map} in \cite{tv91}) 
\begin{equation}\label{sheafmap}
\alpha : \Gamma(X,{\mathcal L}) \longrightarrow \bigoplus_{i=1}^n \overline{{\mathcal L}}_{P_i} \simeq \bFq^n,
\end{equation}
and the image is the
code denoted $C(X,{\mathcal L};{\mathcal S})$, or $C(X,{\mathcal L})$ if the set of points
${\mathcal S}$ is understood from the context.
\end{definition}

If ${\mathcal L} = {\mathcal O}_X(D)$
for an $\bFq$-rational divisor $D$ on $X$ whose support 
is disjoint from 
$\{P_1,\ldots,P_n\}$, then up to monomial equivalence,
this is the same as the evaluation code as in (\ref{ev}) from the subspace
${\mathcal F}$ of the field of rational functions of $X$
given by
\[
{\mathcal F} = \{f \in \bFq(X)^* :\, {\rm div}(f) + D \ge 0\} \cup \{0\}.
\]
For instance, when
$X$ is a smooth algebraic curve and ${\mathcal L} = {\mathcal O}_X(G)$
for some divisor $G$ defined over $\bFq$ whose support
is disjoint from the support of $D = P_1 + \cdots + P_n$,
then this is the same as the algebraic geometric
Goppa code $C_L(D,G)$ from $X$.

For explicit constructions of codes from embedded varieties
$X \subseteq \proj^m$, another more elementary description is
also available using \emph{homogeneous coordinates}
$(a_0 : a_1 : \cdots : a_m)$ for points in $\proj^m$,
where $(a_0 : a_1 : \cdots : a_m)$ and
$(\lambda a_0 : \lambda a_1 : \cdots : \lambda a_m)$
represent the same point whenever $\lambda \in \F^{\, *}$.

\begin{definition}\label{elem}
Choosing any one such homogeneous
coordinate vector defined over $\bFq$ for each of the points $P_i$ in
the set ${\mathcal S}$, define an evaluation map
$ev_{\mathcal S}$ and a code as in (\ref{ev}) using
the vector space ${\mathcal F}_1$ of linear forms (homogeneous
polynomials of degree 1) in $\bFq[x_0,\ldots,x_m]$.  The code
obtained as the image of this mapping is often denoted $C(X)$,
or $C(X; {\mathcal S})$ if it is important to specify the set of points.
Similarly, the space of linear forms can be replaced by the vector
space ${\mathcal F}_h$ of homogeneous polynomials of any degree $h \ge 1$, and
corresponding codes denoted $C_h(X; {\mathcal S})$ or $C_h(X)$
are obtained.
\end{definition}

\begin{example}\label{RM}
Let $X = \proj^m$ itself, and let ${\mathcal S}$
be the set of \emph{affine} $\bFq$-rational points of $X$,
that is, points in the complement of the hyperplane
${\bf V}(x_0)$, having homogeneous coordinate vectors of the
form $(1 : a_1 : \ldots : a_m)$.  With these particular
coordinate vectors, the code $C_h(X; {\mathcal S})$
is the well-known $q$-ary $h$th order (generalized)
\emph{Reed-Muller} code, denoted ${\mathcal R}_q(h,m)$.  (When $m = 1$,
this is the same as an \emph{extended Reed-Solomon} code.)
The block length
is $n = q^m$.  If $h < q$, then
the monomials $x^\beta = x_0^{\beta_0} \cdots x_m^{\beta_m}$
where $|\beta| = \beta_0 + \cdots + \beta_m = h$
are linearly independent on ${\mathcal S}$, so the dimension of
${\mathcal R}_q(s,m)$ is $k = \binom{m + h}{h}$.
If ${\mathcal S} = \proj^m(\bFq)$, the resulting
\emph{projective Reed-Muller codes} have block length
$n = q^m + \cdots + q + 1$.
$\diamondsuit$
\end{example}

There is, of course, a tight connection between
Definition~\ref{sophis} and Definition~\ref{elem}.  If $X$ is
embedded in $\proj^m$ and ${\mathcal L} = {\mathcal O}_X(1)$ is the
hyperplane section bundle, then $C(X,{\mathcal O}_X(1))$ and $C(X)$
are monomially equivalent codes (they differ at most by constant
multiples in each component depending on how the isomorphisms of
the fibers with $\bFq$ are chosen).  Similarly, $C_h(X)$ is
equivalent to $C(X,{\mathcal O}_X(h))$.
Also, in theory it suffices to consider the $C(X) = C_1(X)$
codes, since the $C_h(X)$ code on $X$ is the same as the $C_1$
code on the variety $\nu_h(X)$, where $\nu_h$ is the degree-$h$
\emph{Veronese mapping}
\begin{eqnarray*}
\nu_h : \proj^m & \longrightarrow & \proj^{\binom{m + h}{h}-1}\\
     (x_0 : x_1 : \cdots : x_m) &\mapsto & ( \cdots : x^\beta : \cdots ),
\end{eqnarray*}
and $x^\beta = x_0^{\beta_0} \cdots x_m^{\beta_m}$ ranges over all
monomials of total degree $h$.   The image $\nu_h(\proj^m)$ has
dimension $m$, degree $h^m$, and is isomorphic to $\proj^m$.

\section{Estimating the Parameters}\label{params}

\subsection{Elementary bounds}
Suppose Definition~\ref{elem} is used to construct a code
$C_h(X; {\mathcal S})$ from a variety $X$.  The block length
of the code is $n = \# {\mathcal S}$.  Using
a standard linear algebra result, the dimension is
\[
k = \dim {\mathcal F}_h - \dim \ker ev_{\mathcal S}.
\]
Forms of degree $h$ vanishing on $X$ always give 
elements of the kernel.  The dimension of the space of such forms
can be computed using the long exact cohomology sequence of 
\begin{equation}\label{sheaf}
0 \longrightarrow {\mathcal I}_X(h) \longrightarrow {\mathcal O}_{\proj^m}(h) 
\longrightarrow {\mathcal O}_{X}(h) \longrightarrow 0.
\end{equation}

Since each
codeword is $ev_{\mathcal S}(f) = (f(P_1),\ldots,f(P_n))$ for
some form $f$,
the codeword weight is $n - \# ({\bf V}(f) \cap {\mathcal S})$, 
the number of $P_i$ in ${\mathcal S}$ where $f$ is not zero.
Therefore,
\begin{equation}\label{mindist}
d = \min_{f \ne 0 \in {\mathcal F}_h} \left(n - \#({\bf V}(f) \cap {\mathcal S})\right).
\end{equation}

Along similarly general lines, let $\dim Y = \delta$ and let the degree of $Y$ be
$s < q + 1$ in $\proj^m$.  Let $E$ be an $\bFq$-rational linear
subspace of dimension $m - \delta - 1$ 
with $E \cap Y = \emptyset$.  By projection from $E$ onto a linear subspace 
$L \simeq \proj^\delta$, each $\bFq$-rational point of $L$ corresponds 
to at most $s$ such points of $Y$, so
\begin{equation}\label{linproj}
\# Y(\bFq) \le s\cdot \# \proj^\delta(\bFq) = s(q^\delta + \cdots + q + 1).
\end{equation}
Applying (\ref{linproj}) to $Y = X \cap H$ for a
hyperplane, Lachaud obtains the following elementary bound in \cite{l96}.

\begin{theorem}\label{elembound}
Let $X$ be a projective variety of dimension $\delta$ and
degree $s < q + 1$.  Then for $h = 1$ the $C(X)$ code has
\[
d \ge n - s(q^{\delta - 1} + \cdots + q + 1).
\]
\end{theorem}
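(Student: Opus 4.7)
The plan is to combine the formula (\ref{mindist}) for the minimum distance with the elementary point-count bound (\ref{linproj}) applied to a hyperplane section of $X$. Since a nonzero element $f \in {\mathcal F}_1$ is (up to scalar) a linear form, its zero locus $\mathbf{V}(f) = H$ is a hyperplane in $\proj^m$, and clearly $\mathbf{V}(f) \cap {\mathcal S} \subseteq (X \cap H)(\bFq)$. The theorem will follow once the right-hand side is bounded via (\ref{linproj}).

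In detail, fix a nonzero codeword $ev_{\mathcal S}(f)$ coming from a linear form $f$. Because the codeword is nonzero, $f$ does not vanish identically on ${\mathcal S}$, and in particular we may assume $X \not\subseteq H$ (otherwise $f$ restricts to $0$ on $X$ and the codeword would be zero). Setting $Y := X \cap H$, this is then a proper intersection, so $Y$ is a closed subscheme of $\proj^m$ of pure dimension $\delta - 1$ and of degree at most $s$ (by Bezout, or directly from the definition of degree of $X$ through intersection with a hyperplane). Applying (\ref{linproj}) to $Y$ — with the $\delta$ of that inequality replaced by $\delta - 1$, and using $\deg Y \le s < q+1$ — yields
\[
\#(\mathbf{V}(f) \cap {\mathcal S}) \;\le\; \# Y(\bFq) \;\le\; s(q^{\delta - 1} + \cdots + q + 1).
\]
Substituting into (\ref{mindist}) and minimizing over $f$ producing a nonzero codeword gives the claimed lower bound on $d$.

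The main subtlety is the applicability of (\ref{linproj}) to $Y$: one needs an $\bFq$-rational linear subspace $E \subset \proj^m$ of dimension $m - \delta$ that is disjoint from $Y$, and the hypothesis $s < q + 1$ is exactly what ensures enough room among $\bFq$-rational linear subspaces of that dimension to avoid $Y$. A minor additional wrinkle is that when $H$ is not a generic hyperplane the section $Y$ may fail to be irreducible or reduced, but the projection argument underlying (\ref{linproj}) depends only on the dimension and degree of the target scheme and hence goes through unchanged.
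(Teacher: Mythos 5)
Your proposal is correct and follows exactly the route the paper takes: it obtains the bound by applying the projection estimate (\ref{linproj}) to the hyperplane section $Y = X \cap H$, which has dimension $\delta - 1$ and degree at most $s$, and then feeding this into (\ref{mindist}). The extra care you take (excluding $X \subseteq H$, existence of a rational center of projection disjoint from $Y$, non-reduced sections) just fills in details the paper leaves to Lachaud's original argument.
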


A more refined estimate of the number of $\bFq$
rational points on a projective hypersurface establishes
the following result for the projective
Reed-Muller codes introduced in Example~\ref{RM}.

\begin{theorem}
\label{RMparams}
Let $h \le q$.
The projective Reed-Muller code of order $h$
has parameters
\[
\left[q^m + \cdots + q + 1, \binom{m + h}{h}, (q + 1 - h)q^{m-1}\right].
\]
\end{theorem}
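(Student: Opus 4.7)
The statement contains three assertions: $n=q^m+\cdots+q+1$, $k=\binom{m+h}{h}$, and $d=(q+1-h)q^{m-1}$. The block length is immediate. For the dimension I would show that the evaluation map $\alpha\colon\mathcal{F}_h\to\bFq^n$ of Definition~\ref{elem} is injective when $h\le q$, so that $k=\dim\mathcal{F}_h=\binom{m+h}{h}$. For the minimum distance I would exhibit an explicit codeword of weight $(q+1-h)q^{m-1}$ for the upper bound, and prove the classical Serre--S{\o}rensen bound
\[
\#\bigl({\bf V}(f)\cap\proj^m(\bFq)\bigr)\;\le\;hq^{m-1}+q^{m-2}+\cdots+q+1
\]
for every nonzero homogeneous $f\in\bFq[x_0,\ldots,x_m]$ of degree $h\le q$ for the lower bound, via~(\ref{mindist}).

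For injectivity I would induct on $m$; the case $m=0$ is trivial. In the inductive step, suppose some nonzero $f$ of degree $h\le q$ vanishes on every point of $\proj^m(\bFq)$. Consider the pencil of $q+1$ hyperplanes ${\bf V}(x_0)$ and ${\bf V}(x_m-ax_0)$, $a\in\bFq$, all sharing the codimension-2 axis $\{x_0=x_m=0\}$. On each such hyperplane the restriction of $f$ is a degree-$h$ homogeneous polynomial in $m$ variables vanishing on all of $\proj^{m-1}(\bFq)$, hence identically zero by induction, so each of the $q+1$ pairwise coprime defining linear forms divides $f$; this forces $\deg f\ge q+1>h$, a contradiction. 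For the min-distance upper bound, pick $h$ distinct scalars $a_1,\ldots,a_h\in\bFq$ (possible since $h\le q$) and put $f=\prod_{i=1}^{h}(x_0-a_ix_1)$. Its zero locus is a union of $h$ hyperplanes pairwise meeting only along $\Lambda={\bf V}(x_0,x_1)\cong\proj^{m-2}$, so $\#({\bf V}(f)\cap\proj^m(\bFq))=(q^{m-2}+\cdots+q+1)+hq^{m-1}$, and by~(\ref{mindist}) the resulting codeword has weight exactly $(q+1-h)q^{m-1}$.

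The lower bound on $d$ is the heart of the proof and reduces via~(\ref{mindist}) to the displayed upper bound on $\#{\bf V}(f)(\bFq)$, which I would also prove by induction on $m$. The base $m=1$ is just ``a nonzero homogeneous degree-$h$ polynomial in two variables has at most $h$ zeros in $\proj^1(\bFq)$''. For the step, fix a codimension-$2$ subspace $\Lambda$ defined over $\bFq$ and the pencil $H_0,\ldots,H_q$ of hyperplanes through it. The disjoint decomposition $\proj^m(\bFq)=\Lambda(\bFq)\sqcup\bigsqcup_{i=0}^q(H_i\setminus\Lambda)(\bFq)$ yields
\[
\#{\bf V}(f)(\bFq)\;=\;\sum_{i=0}^{q}\#\bigl({\bf V}(f)\cap H_i\bigr)(\bFq)\;-\;q\cdot\#\bigl({\bf V}(f)\cap\Lambda\bigr)(\bFq).
\]
Let $t\le h$ be the number of $H_i$ contained in ${\bf V}(f)$ (each contributes a distinct linear factor to $f$); on the remaining $q+1-t>0$ hyperplanes, ${\bf V}(f)\cap H_i$ is a degree-$h$ hypersurface in $H_i\cong\proj^{m-1}$ and the inductive hypothesis applies. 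The main obstacle is the bookkeeping in this identity: the contributions from $t$, from $\#({\bf V}(f)\cap\Lambda)(\bFq)$, and from the induction-bounded non-component intersections must be balanced so that the estimate equals $hq^{m-1}+q^{m-2}+\cdots+q+1$, with equality achieved precisely on the pencil-of-$h$-hyperplanes configuration used for the upper bound. The hypothesis $h\le q$ is essential here, as it guarantees $t\le h\le q<q+1$ and hence that at least one hyperplane in the pencil is not a component of ${\bf V}(f)$.
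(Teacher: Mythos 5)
Your overall architecture matches the paper's: the block length is immediate, $k$ comes from injectivity of the evaluation map, the upper bound on $d$ comes from the explicit codeword cut out by $h$ hyperplanes through a common codimension-2 subspace (your count of its weight is correct), and the lower bound reduces via (\ref{mindist}) to the inequality $\#\bigl({\bf V}(f)\cap\proj^m(\bFq)\bigr)\le hq^{m-1}+q^{m-2}+\cdots+q+1$ for nonzero forms of degree $h\le q$. The paper does not reprove this inequality; it simply invokes Serre's result \cite{se91}. Your separate inductive proof of injectivity is correct but redundant (as in the paper, injectivity already follows from $d>0$). The one place where you attempt more than the paper, namely a self-contained proof of the Serre bound, is exactly where your argument has a genuine gap.

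The gap is the step you yourself defer as ``bookkeeping'': with an \emph{arbitrary} codimension-2 axis $\Lambda$ and only the ingredients you list ($t\le h$ pencil members contained in ${\bf V}(f)$, the inductive bound on the remaining members, and $\#({\bf V}(f)\cap\Lambda)(\bFq)\ge 0$), the estimate does not close. In the case $t=0$ with $\Lambda\cap{\bf V}(f)(\bFq)=\emptyset$, your identity gives only $\#{\bf V}(f)(\bFq)\le(q+1)\bigl(hq^{m-2}+q^{m-3}+\cdots+1\bigr)$, which exceeds the target $hq^{m-1}+q^{m-2}+\cdots+1$ by roughly $hq^{m-2}$; already for a smooth conic in $\proj^2$ and a pencil based at a point off the conic you get $2(q+1)$ rather than $2q+1$, and this loss compounds through the induction, ultimately yielding only a bound of the weaker type in (\ref{linproj}), not the sharp one needed for $d=(q+1-h)q^{m-1}$. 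Closing it requires a further idea, not mere balancing: either Serre's actual case split (if ${\bf V}(f)$ contains an $\bFq$-rational hyperplane, write $f=\ell g$ and bound the affine zeros of $g$ off that hyperplane by $(h-1)q^{m-1}$; if not, average $\#({\bf V}(f)\cap H)$ over \emph{all} hyperplanes $H$), or a pencil chosen adapted to ${\bf V}(f)$ (e.g.\ anchored at a rational point of it, so every non-component member contributes at most $h$ points including the base point), with a careful treatment of the components that does better than the crude ``$t\le h$'' count. Alternatively, do what the paper does and cite \cite{se91}; as written, the heart of the lower bound is missing.
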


\begin{proof}
Write ${\mathcal S} = \proj^m(\bFq)$.
The evaluation mapping is injective and $k = \dim {\mathcal F}_s = \binom{m + h}{h}$
provided that $d > 0$.  By \cite{se91}, if $f$ is a
homogeneous polynomial of degree $h \le q$, then (improving the bound
of (\ref{linproj}))
\[
\# ({\bf V}(f) \cap {\mathcal S}) \le  hq^{m-1} + q^{m-2} + \cdots + q + 1.
\]
Moreover, if ${\bf V}(f)$ is the union of $h$ $\bFq$-rational 
hyperplanes meeting along a common $(m - 2)$-dimensional linear 
subspace, this bound is attained.  Hence
\[
d = (q^m + q^{m-1} + \cdots + q + 1) - (hq^{m-1} + q^{m-2} + \cdots + q + 1)
= (q + 1 - h)q^{m-1}
\]
as claimed.
\end{proof}

In the remainder of this section, several other general techniques
for estimating the minimum distance of these codes
will be considered.  The first three are primarily geometric,
while the last is arithmetic in nature.

\subsection[]{Bounds from covering families of curves}

For the following discussion, it will be most convenient to
use the code construction given in Definition~\ref{sophis}.
In many concrete cases, it can be seen that
the points in the set ${\mathcal S}$ are distributed on a collection
of \emph{curves} $C_i$ (subvarieties of dimension $1$) on the variety $X$.
Since each section $f \in \Gamma(X,{\mathcal L})$ on $X$ defines
a divisor of zeroes $Z(f)$, a subvariety of codimension 1 on $X$,
determining the minimum distance of the $C(X,{\mathcal L})$
code reduces to understanding how many times the divisors $Z(f)$
can intersect the curves $C_i$ at points of ${\mathcal S}$.   To prepare,
let $C$ be any irreducible curve in $X$.
Observe that the divisors $Z(f)$ for $f \in \Gamma(X,{\mathcal L})$
all cut out divisors on $C$ of the same degree.  This degree
will be denoted by ${\mathcal L}\cdot C$.  In this situation, Hansen
derives a lower bound for $d$ in \cite{sh01}.

\begin{theorem}
\label{Hansencurves}
Let $X$ be a normal projective variety defined over $\bFq$,
of dimension $\dim X \ge 2$.  Let ${\mathcal S} \subseteq X(\bFq)$
and assume $\displaystyle{{\mathcal S} \subset \bigcup_{i=1}^a C_i}$
where $C_i$ are irreducible curves on $X$, also defined
over $\bFq$.  Assume that $\#(C_i \cap {\mathcal S}) \le N$ for all $i$.
Let ${\mathcal L}$ be a line bundle on $X$ defined
over $\bFq$ such that
\[
0 \le {\mathcal L}\cdot C_i \le \eta \le N
\]
for all $i$.  Let
\[
\ell = \max_{f\ne 0\in \Gamma(X,{\mathcal L})} \# \{i : Z(f)\ \text{contains}\ C_i\}.
\]
Then the code $C(X,{\mathcal L}; {\mathcal S})$ has
\[
d \ge \# {\mathcal S} - \ell N - (a - \ell) \eta.
\]
\end{theorem}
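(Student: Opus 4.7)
The plan is to fix an arbitrary nonzero section $f \in \Gamma(X,{\mathcal L})$ and produce an upper bound on $\#(Z(f) \cap {\mathcal S})$; subtracting this from $\#{\mathcal S}$ will give the codeword weight, and minimizing over $f$ will give $d$. The strategy is dictated by the hypothesis ${\mathcal S} \subset \bigcup_{i=1}^a C_i$: to count vanishings of $f$ on ${\mathcal S}$, I only need to count how often $Z(f)$ meets each $C_i$ at a point of ${\mathcal S}$, and then sum.

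The key step is a dichotomy on the curves $C_i$. If $Z(f) \supseteq C_i$, then every point of $C_i \cap {\mathcal S}$ lies in $Z(f)$, so this curve contributes at most $\#(C_i \cap {\mathcal S}) \le N$ points to the count. If $Z(f)$ does not contain $C_i$, then since $C_i$ is irreducible, the scheme-theoretic restriction $Z(f)|_{C_i}$ is an effective divisor on $C_i$. Because all sections of ${\mathcal L}$ cut out divisors on $C_i$ of the same degree ${\mathcal L}\cdot C_i$ (this is exactly the remark preceding the statement, valid because $X$ is normal so that ${\mathcal L}|_{C_i}$ has a well-defined degree after pullback to the normalization of $C_i$), the number of distinct points in $Z(f) \cap C_i$ is bounded by ${\mathcal L}\cdot C_i \le \eta$. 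So this curve contributes at most $\eta$ points of ${\mathcal S}$.

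Let $\ell(f) = \#\{i : Z(f) \supseteq C_i\}$. Combining the two cases and using the fact that every point of ${\mathcal S}$ lies on some $C_i$,
\begin{equation*}
\#\bigl(Z(f) \cap {\mathcal S}\bigr) \;\le\; \ell(f)\cdot N \;+\; \bigl(a - \ell(f)\bigr)\cdot \eta.
\end{equation*}
Since $N \ge \eta$, the right-hand side is a nondecreasing function of $\ell(f)$, and $\ell(f) \le \ell$ by the definition of $\ell$. Hence
\begin{equation*}
\#\bigl(Z(f) \cap {\mathcal S}\bigr) \;\le\; \ell N + (a - \ell)\eta
\end{equation*}
for every nonzero $f$, and the weight of the corresponding codeword is at least $\#{\mathcal S} - \ell N - (a-\ell)\eta$. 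Taking the minimum over $f \ne 0$ yields the stated bound on $d$.

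The main obstacle, really a bookkeeping issue rather than a deep one, is justifying that ${\mathcal L}\cdot C_i$ genuinely bounds the \emph{number of distinct points} of $Z(f) \cap C_i$ when $C_i \not\subseteq Z(f)$. This uses normality of $X$ (so $C_i$ has a well-behaved normalization to which ${\mathcal L}$ pulls back as a line bundle of the asserted degree) together with the assumption ${\mathcal L}\cdot C_i \ge 0$, which prevents the intersection number from being vacuous. Once this is in place, the proof is a clean counting argument exploiting the cover of ${\mathcal S}$ by the $C_i$.
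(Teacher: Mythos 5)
Your proposal is correct and follows essentially the same argument as the paper: for each nonzero section $f$, split the curves into those contained in $Z(f)$ (each contributing at most $N$ points of ${\mathcal S}$) and those not contained (each contributing at most ${\mathcal L}\cdot C_i \le \eta$ points, since $Z(f)$ restricts to an effective divisor of that degree), then use $\eta \le N$ and $\ell(f) \le \ell$ to get the uniform bound $\ell N + (a-\ell)\eta$ on $\#(Z(f)\cap{\mathcal S})$. Your added justification that the degree ${\mathcal L}\cdot C_i$ bounds the number of distinct intersection points is exactly the point the paper leaves implicit in the remark preceding the theorem.
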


\begin{proof}
Let $f \in \Gamma(X,{\mathcal L})$, let $D = Z(f)$, and let
$E = Z(f) \cap \displaystyle{\bigcup_{i=1}^a C_i}$.
Suppose $E$ contains $\ell' \le \ell$ of the $C_i$.
The number points of ${\mathcal S}$ that are
contained in $E$ is estimated as follows:
\begin{eqnarray*}
\# (E\cap {\mathcal S}) &\le & \ell' N + (a - \ell') \eta\\
                 &\le & \ell N + (a - \ell)\eta
\end{eqnarray*}
(since by hypothesis $\eta \le N$).
Hence $ev_{\mathcal S}(f)$ has at least
$\# {\mathcal S} - \ell N - (a - \ell)\eta$ nonzero entries.
\end{proof}

\begin{example}\label{prod}
Let $X = \proj^1\times \proj^1$.  Let ${\mathcal S} = X(\bFq)$,
which consists of $(q+1)^2$ points, equally distributed over the
lines $C_1,\ldots,C_{q+1}$ of one of the rulings.
The Picard group of line bundles modulo isomorphism is
${\rm Pic}(X) \simeq \mathbb{Z} \oplus \mathbb{Z}$, so
the lines $C_i$ may be taken as the divisors of zeros of
sections of a line bundle of type $(1,0)$.
Let ${\mathcal L}$ have type $(\alpha,\beta)$ where $0\le \alpha,\beta \le q+1$.
Apply Theorem~\ref{Hansencurves} to estimate $d$
for the $C(X,{\mathcal L})$ code.  Because of the description of ${\mathcal S}$
above, $N = q+1$.
The divisor $Z(f)$ for $f\in \Gamma(X,{\mathcal L})$ contains at
most $\alpha$ of the $C_i$, so $\ell = \alpha$.
Moreover, ${\mathcal L}\cdot C_i = \beta$
for each $i$, so $\eta = \beta$.  The bound is
\[
d \ge (q + 1)^2 - \alpha (q + 1) - (q + 1 - \alpha)\beta =
(q + 1 - \alpha)(q + 1 - \beta).
\]
It is easy to construct codewords of this weight via bihomogeneous
polynomials on $\proj^1\times\proj^1$.  So this is the exact
minimum distance.  
$\diamondsuit$
\end{example}

\subsection{Bounds using Seshadri constants}

A second general method for estimating the
minimum distance of the $C(X, {\mathcal L}; {\mathcal S})$ codes
is based on the \emph{Seshadri constant} of ${\mathcal L}$
relative to the set ${\mathcal S}$.  This is potentially useful
but requires some significantly
more sophisticated birational geometry to state and apply.
Let $\pi : Y \to X$
be the blow up of the $X$ at the points in ${\mathcal S}$ and call
the exceptional divisor $E$.  Then
the Seshadri constant is defined as
\[
\varepsilon({\mathcal L},{\mathcal S}) = \sup \{\varepsilon \in \mathbb{Q} :
\pi^*{\mathcal L} - \varepsilon E \text{ is nef on } Y\}.
\]
(Here, ``nef'' means \emph{numerically effective}, that 
is, $(\pi^*{\mathcal L} - \varepsilon E)\cdot C \ge 0$ for all irreducible
curves $C$ on $Y$.)
Hansen proves the following estimate for the minimum distance of the
$C(X, {\mathcal L}; {\mathcal S})$ codes in \cite{sh01}.

\begin{theorem}\label{Sesh}
Let $X$ be a nonsingular projective variety of dimension $\ge 2$ over $\bFq$.
If ${\mathcal L}$ is ample with Seshadri constant
$\varepsilon({\mathcal L},{\mathcal S}) \ge e \in \mathbb{N}$,
and $n > e^{1-\dim(X)}{\mathcal L}^{\dim(X)}$, then
$C(X, {\mathcal L}; {\mathcal S})$ has minimum distance
$d \ge n - e^{1-\dim(X)} {\mathcal L}^{\dim(X)}$.
\end{theorem}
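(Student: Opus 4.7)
The plan is to bound the weight of an arbitrary nonzero codeword $ev_{\mathcal S}(f)$ by translating the Seshadri hypothesis into an intersection-theoretic inequality on the blow-up $\pi : Y \to X$ of $X$ at the points of ${\mathcal S}$, with exceptional divisors $E_1,\ldots,E_n$ over the $P_i$ and total exceptional divisor $E = E_1 + \cdots + E_n$. Fix a nonzero section $f \in \Gamma(X,{\mathcal L})$, let $D = Z(f)$ be its effective divisor of zeros, and let $w$ denote the weight of $ev_{\mathcal S}(f)$. Then $D$ passes through exactly the $n - w$ points $P_i \in {\mathcal S}$ where $f$ vanishes, so on $Y$ the strict transform $\widetilde{D}$ of $D$ satisfies $\pi^* D = \widetilde{D} + \sum_i m_i E_i$ with $m_i = \mathrm{mult}_{P_i}(D)$, and $m_i \ge 1$ precisely when $P_i \in D$. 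Because the nef cone is closed, the hypothesis $\varepsilon({\mathcal L},{\mathcal S}) \ge e$ yields that the $\mathbb{Q}$-divisor $N := \pi^*{\mathcal L} - eE$ is nef on $Y$.

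The core of the argument is then a single inequality. Since $\widetilde{D}$ is effective and $N$ is nef, Kleiman's theorem that a product of nef classes pairs non-negatively with any effective cycle of complementary dimension gives
\[
\widetilde{D} \cdot N^{\delta-1} \ge 0, \qquad \delta = \dim X.
\]
I would then expand this product using the standard intersection identities on a point blow-up: $E_i \cdot E_j = 0$ for $i \neq j$; every mixed monomial $(\pi^*{\mathcal L})^a \cdot E_i^b$ with $a,b \ge 1$ and $a + b = \delta$ vanishes by the projection formula (because $E_i$ collapses to a point); and $E_i^\delta = (-1)^{\delta-1}$, coming from the normal bundle ${\mathcal O}_{\proj^{\delta-1}}(-1)$ of $E_i \simeq \proj^{\delta-1}$. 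The cross-terms then collapse to the clean identity
\[
\widetilde{D} \cdot N^{\delta-1} = {\mathcal L}^\delta - e^{\delta-1}\sum_i m_i.
\]
Combined with non-negativity, this forces $\sum_i m_i \le e^{1-\delta}{\mathcal L}^\delta$. Since $m_i \ge 1$ at each of the $n - w$ points of ${\mathcal S}$ where $f$ vanishes, we obtain $n - w \le e^{1-\delta}{\mathcal L}^\delta$, i.e., $w \ge n - e^{1-\delta}{\mathcal L}^\delta$. Minimizing over nonzero $f$ and using the hypothesis $n > e^{1-\delta}{\mathcal L}^\delta$ to exclude degenerate cases gives the claimed lower bound on $d$.

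The hard part, I expect, is not the blow-up computation itself, which is essentially bookkeeping once the three intersection identities above are in hand, but rather the positivity groundwork needed to deploy them: passing from the supremum definition of the Seshadri constant to genuine nef-ness of $\pi^*{\mathcal L} - eE$ at the boundary value $\varepsilon = e$, and using Kleiman-type positivity to intersect a $(\delta-1)$-fold product of nef classes with an effective divisor rather than merely using $N^\delta \ge 0$. Both steps rest on the fact that nef divisors are limits of ample $\mathbb{Q}$-divisors, so non-negativity of intersection numbers against effective cycles survives in the limit. Once that technology is accepted, the rest of the proof is routine.
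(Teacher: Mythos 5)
Your proof is correct, and it is essentially the argument the paper itself defers to: the survey states this theorem without proof, citing Hansen \cite{sh01}, and Hansen's proof is exactly this blow-up computation --- nefness of $N = \pi^*{\mathcal L} - eE$ (which for $e \le \varepsilon({\mathcal L},{\mathcal S})$ also follows without any limit argument from writing $N$ as a convex combination of $\pi^*{\mathcal L} - \varepsilon E$ and the nef class $\pi^*{\mathcal L}$), the identity $\widetilde{D}\cdot N^{\delta-1} = {\mathcal L}^{\delta} - e^{\delta-1}\sum_i m_i$ obtained from $\pi^*D\cdot N^{\delta-1} = {\mathcal L}^{\delta}$ and $E_i\cdot N^{\delta-1} = e^{\delta-1}$, and Kleiman-type nonnegativity of nef classes against the effective strict transform. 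Your handling of the side conditions is also right: $\sum_i m_i \ge \#\{i : P_i \in Z(f)\}$ gives the bound on the number of zeros, and the hypothesis $n > e^{1-\delta}{\mathcal L}^{\delta}$ ensures every nonzero section yields a nonzero codeword, so the weight bound is indeed a bound on $d$.
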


This is particularly well-suited for analyzing certain codes from
Deligne-Lusztig varieties to be defined in Section~\ref{DLvars} below.

\subsection{Bounds from ${\mathcal S}$ itself}

All of the $C_h(X; {\mathcal S})$ codes introduced in Section~\ref{constr}
can be viewed as punctures of the projective Reed-Muller code
of order $h$ on the appropriate $\proj^m$ (delete the components
corresponding to points in the complement of ${\mathcal S}$).  For 
this reason, in addition to making use of the properties of the 
variety $X$, it is also possible to use properties of the $0$-dimensional
algebraic set (or scheme) ${\mathcal S}$ itself to estimate $d$.
Let ${\mathcal I}_{\mathcal P}$ be the sheaf of ideals defining any $0$-dimensional
${\mathcal P}$.  From the long exact cohomology sequence of the
exact sequence of sheaves
\[
0 \longrightarrow {\mathcal I}_{\mathcal P} \longrightarrow {\mathcal O}_{\proj^m} \longrightarrow {\mathcal O}_{\mathcal P} \longrightarrow 0,
\]
it follows that for all $h \ge 0$,
\begin{equation}\label{lecs}
0 \to H^0({\mathcal I}_{\mathcal P}(h)) \to H^0({\mathcal O}_{\proj^m}(h)) \to H^0({\mathcal O}_{\mathcal P}(h)) \to H^1({\mathcal I}_{\mathcal P}(h)) \to 0.
\end{equation}
The term $H^0({\mathcal I}_{\mathcal P}(h))$ gives the space of homogeneous forms of degree
$h$ vanishing on ${\mathcal P}$.  The term $H^1({\mathcal I}_{\mathcal P}(h))$ measures the failure
of the points in ${\mathcal P}$ to impose independent conditions on forms of degree $h$.  

In the case that ${\mathcal S}$ is a \emph{complete intersection}
of hypersurfaces of degrees $d_1,\ldots,d_m$, there
are particularly nice
techniques from commutative algebra and algebraic geometry
related to the classical \emph{Cayley-Bacharach Theorem} that apply.
A modern version of this result due to Davis, Geramita,
and Orecchia can be stated as follows in the
situation at hand.

\begin{theorem}\label{CB}
Let ${\mathcal S} \subset \proj^m$ be a reduced complete intersection of
hypersurfaces of degrees $d_1,\ldots,d_m$.  Let $\Gamma', \Gamma''$
be disjoint subsets of ${\mathcal S}$ with ${\mathcal S} = \Gamma' \cup \Gamma''$.
Let $s = \sum_{i=1}^m d_i - m - 1$.  Then for all $h \ge 0$,
\[
\dim H^0({\mathcal I}_{\Gamma'}(h)) - \dim H^0({\mathcal I}_{\mathcal S}(h)) = \dim H^1({\mathcal I}_{\Gamma''}(s - h)).
\]
\end{theorem}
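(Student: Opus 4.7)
The plan is to prove Theorem~\ref{CB} by graded local duality on the coordinate ring of the complete intersection ${\mathcal S}$, exploiting that this ring is one-dimensional and Gorenstein. Write $R = \bFq[x_0,\ldots,x_m]$, $\mathfrak{m} = (x_0,\ldots,x_m)$, and $A = R/I_{\mathcal S}$, and let $J' = I_{\Gamma'}/I_{\mathcal S}$, $J'' = I_{\Gamma''}/I_{\mathcal S}$ be the corresponding graded ideals of $A$. Since ${\mathcal S}, \Gamma', \Gamma''$ are zero-dimensional and their homogeneous ideals are saturated, the left-hand side of the claim equals $\dim J'_h$. Since $R/I_{\Gamma''}$ has positive depth, the standard comparison $H^1({\mathcal I}_{\Gamma''}(t)) \cong H^1_{\mathfrak{m}}(R/I_{\Gamma''})_t$ reduces the theorem to the algebraic identity
\[
\dim J'_h \;=\; \dim H^1_{\mathfrak{m}}(A/J'')_{s-h}.
\]

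I would then establish two ingredients. First, the canonical module: because $f_1,\dots,f_m$ is a regular sequence, the Koszul resolution of $A$ over $R$ computes $\omega_A \cong A(s)$ with $s = \sum d_i - m - 1$, so $A$ is a one-dimensional graded Gorenstein ring. Second, the linkage identity $\mathrm{Ann}_A(J'') = J'$. The inclusion $J' \subseteq \mathrm{Ann}_A(J'')$ is immediate because a form in $I_{\Gamma'}$ times one in $I_{\Gamma''}$ vanishes on all of ${\mathcal S}$ and hence represents $0$ in $A$. For the reverse inclusion I would use that reducedness of ${\mathcal S}$ identifies the minimal primes $P_Q$ of $A$ with the points $Q \in {\mathcal S}$, each localization $A_{P_Q}$ being a field; at $P_Q$ with $Q \in \Gamma'$ the ideal $J''$ contains forms not vanishing at $Q$, so $(J'')_{P_Q} = A_{P_Q}$, forcing any annihilator $a$ of $J''$ to satisfy $a \in P_Q$, and hence $a \in \bigcap_{Q \in \Gamma'} P_Q = J'$.

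With these in place, graded local duality for the CM graded ring $A$ (dimension $1$, $\omega_A \cong A(s)$) supplies, for any finitely generated graded $A$-module $M$, an isomorphism
\[
\bigl(H^1_{\mathfrak{m}}(M)\bigr)_j \;\cong\; \mathrm{Hom}_{\bFq}\!\bigl(\mathrm{Hom}_A(M,\omega_A)_{-j},\,\bFq\bigr).
\]
Applied to $M = A/J''$, the right-hand module is $\mathrm{Hom}_A(A/J'', A(s)) = \mathrm{Ann}_A(J'')(s) = J'(s)$ by the linkage step, whose $(-j)$-th graded piece has dimension $\dim J'_{s-j}$. Setting $j = s-h$ gives $\dim H^1_{\mathfrak{m}}(A/J'')_{s-h} = \dim J'_h$, which, combined with the reformulation above, is the theorem.

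The main obstacle, in my view, is the linkage step: the identity $\mathrm{Ann}_A(J'') = J'$ is the algebraic content of $\Gamma'$ and $\Gamma''$ being linked through the Gorenstein ${\mathcal S}$, and once it is in hand the remainder is essentially a bookkeeping calculation with graded local duality. An equivalent route would replace the local duality step by Serre duality on the zero-dimensional Gorenstein ${\mathcal S}$ using $\omega_{\mathcal S} \cong {\mathcal O}_{\mathcal S}(s)$, but the underlying computation is identical.
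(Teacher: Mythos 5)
The paper itself offers no proof of Theorem~\ref{CB}: it is quoted as a known result of Davis, Geramita, and Orecchia, with \cite{gls05} cited as the source, so there is no in-paper argument to compare yours against. Your proof is correct and is essentially the standard liaison/Gorenstein-duality argument found in those references. Each step checks out: since $I_{\mathcal S}\subseteq I_{\Gamma'}$ are saturated, the left side is $\dim J'_h$; since $R/I_{\Gamma''}$ has depth $\ge 1$, $H^1({\mathcal I}_{\Gamma''}(t))\cong H^1_{\mathfrak m}(R/I_{\Gamma''})_t$; the Koszul resolution gives $\omega_A\cong A(s)$ with $s=\sum d_i-m-1$; the linkage identity $\mathrm{Ann}_A(J'')=J'$ is proved correctly, and it is worth emphasizing that the reducedness hypothesis is used twice there (to get $I_{\Gamma'}\cdot I_{\Gamma''}\subseteq I_{\mathcal S}$, and to make the localizations at the minimal primes fields so that annihilation forces membership in $\bigcap_{Q\in\Gamma'}P_Q=J'$); and graded local duality for the one-dimensional Cohen--Macaulay ring $A$ then yields $\dim H^1_{\mathfrak m}(A/J'')_{s-h}=\dim J'_h$, which is the statement for every $h$, including $h>s$ where both sides stabilize at $\#\Gamma''$. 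The only point you pass over silently is that $(f_1,\dots,f_m)$ is already saturated (it is Cohen--Macaulay of dimension one, hence has no $\mathfrak m$-torsion), so it coincides with the saturated ideal $I_{\mathcal S}$ of the reduced set ${\mathcal S}$; this is needed to apply the Koszul computation of $\omega_A$ to $A=R/I_{\mathcal S}$, and it is immediate, so it is a presentational gap rather than a mathematical one. Your closing remark is also accurate: the same computation can be packaged as Serre/Gorenstein duality on the finite scheme ${\mathcal S}$ with $\omega_{\mathcal S}\cong{\mathcal O}_{\mathcal S}(s)$, which is the form in which the cited literature usually states it.
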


Hence, one way to interpret Theorem~\ref{CB} is that when $\Gamma' \subset {\mathcal S}$,
the difference in dimension between the space of homogeneous forms of
degree $a$ vanishing on $\Gamma'$ and the subspace vanishing
on ${\mathcal S}$ is equal to the dimension of $H^1({\mathcal I}_{\Gamma''}(s - a))$.
Moreover by (\ref{lecs}), this dimension measures the failure of
$\Gamma''$ to impose independent conditions on homogeneous forms of
degree $s - a$.  

Applied to the corresponding codes from ${\mathcal S}$
consisting of $d_1d_2\ldots d_m$ distinct $\bFq$-rational
points, this result implies the
following.

\begin{theorem}\label{CBbd}
Let ${\mathcal S}$ be a reduced complete intersection of hypersurfaces
of degrees $d_1,\ldots,d_m$ in $\proj^m$.  Let $s = \sum_{i=1}^m d_i - m - 1$
as in Theorem~\ref{CB}.  If $1 \le h\le s$, the code $C_h({\mathcal S})$
has minimum distance
\[
d \ge \sum_{i=1}^m d_i - h - (m - 1) = s - h + 2.
\]
\end{theorem}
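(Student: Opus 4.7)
The plan is to turn the minimum distance question into a question about which points of $\mathcal{S}$ the zero locus of a degree-$h$ form can contain, and then use Theorem~\ref{CB} to convert \emph{containing many points} into a failure-of-independent-conditions statement for the complementary set. Fix a nonzero codeword $ev_{\mathcal{S}}(f)$, where $f \in {\mathcal F}_h$ does not lie in the kernel of the evaluation map; set $\Gamma' = \mathbf{V}(f) \cap \mathcal{S}$ and $\Gamma'' = \mathcal{S} \setminus \Gamma'$. The weight of the codeword is exactly $\#\Gamma''$, so a lower bound on $d$ is a lower bound on $\#\Gamma''$ valid for every such $f$.

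Next I would feed this setup into the Cayley--Bacharach statement. Since $f$ vanishes on $\Gamma'$ but not on all of $\mathcal{S}$, we have the strict inequality $\dim H^0(\mathcal{I}_{\Gamma'}(h)) > \dim H^0(\mathcal{I}_{\mathcal{S}}(h))$. Theorem~\ref{CB} then gives
\[
\dim H^1(\mathcal{I}_{\Gamma''}(s-h)) \;=\; \dim H^0(\mathcal{I}_{\Gamma'}(h)) - \dim H^0(\mathcal{I}_{\mathcal{S}}(h)) \;>\; 0,
\]
which is where the hypothesis $h \le s$ is used to ensure $s-h \ge 0$ so the statement is meaningful. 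By the cohomology sequence (\ref{lecs}) applied to $\mathcal{P} = \Gamma''$, the nonvanishing of $H^1(\mathcal{I}_{\Gamma''}(s-h))$ is precisely the statement that $\Gamma''$ fails to impose independent conditions on the space of homogeneous forms of degree $s-h$ on $\proj^m$.

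The main obstacle, and the step whose proof I would record explicitly, is the elementary geometric lemma that any set of at most $t+1$ distinct points in $\proj^m$ imposes independent conditions on forms of degree $t$. The idea is standard: given a set $T$ with $\#T \le t+1$ and a distinguished point $p \in T$, for each $p' \in T \setminus \{p\}$ choose a linear form $L_{p'}$ vanishing at $p'$ but not at $p$ (possible since $p' \ne p$), and then take the product $\prod_{p' \in T\setminus\{p\}} L_{p'}$, multiplied by a suitable power of a linear form not vanishing at $p$ to bring the total degree up to $t$. This produces a degree-$t$ form vanishing on $T \setminus \{p\}$ but not at $p$, giving the required separation and hence independent conditions.

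Combining these steps, the failure of independent conditions on degree $s-h$ forms forces $\#\Gamma'' \ge (s-h) + 2$. Since $f$ was an arbitrary nonzero codeword representative, this yields
\[
d \;\ge\; s - h + 2 \;=\; \sum_{i=1}^{m} d_i - h - (m-1),
\]
as claimed.
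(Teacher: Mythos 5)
Your argument is correct and is essentially the paper's own proof: both reduce the weight of a nonzero codeword to $\#\Gamma''$, apply Theorem~\ref{CB} together with the exact sequence (\ref{lecs}), and use the standard fact that at most $t+1$ distinct points impose independent conditions on forms of degree $t$ to force $\#\Gamma'' \ge s-h+2$. The only difference is presentational: the paper argues contrapositively (if $\Gamma''$ is too small, $H^1({\mathcal I}_{\Gamma''}(s-h))$ vanishes and $f$ must vanish on all of ${\mathcal S}$), while you argue directly and also write out the independent-conditions lemma that the paper leaves implicit.
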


The proof is accomplished by showing that under these hypotheses,
any form of degree $h$ that is zero on a subset $\Gamma'$ that is
too large must be zero at all points in ${\mathcal S}$ because the
$H^1({\mathcal I}_{\Gamma''}(s - h))$ group vanishes.

The bound on $d$ given here was improved rather strikingly 
by Ballico and Fontanari to $d \ge m(s - h) + 2$ 
under the assumption that all subsets of $m+1$ of the points in ${\mathcal S}$
span $\proj^m$ -- see \cite{bf06} for this. 

Bounds derived by these methods
are usually interesting only for $h$ close to $s$.  
Moreover some, but not all, interesting examples of ${\mathcal S}$ satisfy
the complete intersection hypothesis.  For instance the affine
$\bFq$-rational points in $\proj^m$ form a complete intersection
for all $m$.  The $\F_8$-rational
points on the Klein quartic and the $\F_{r^2}$ points on the
Hermitian curve are other examples.    
  
\subsection{General Weil-type bounds}

From (\ref{mindist}) above, and the proof of
Theorem~\ref{RMparams}, the minimum distance
of a $C(X)$ code as in Definition~\ref{elem} is determined by
the numbers of $\bFq$-rational points on the subvarieties
$Y = X \cap {\bf V}(f)$.  Hence, another possible approach to estimate $d$
is to apply general bounds for $\# Y(\bFq)$, for instance
bounds derived from the statements of the Weil conjectures,
or refined versions of these.

We very briefly recall the deep mathematics behind this approach.
Thinking of $X$ as a variety over the algebraic closure of the
finite field, the number
of $\bFq$-rational points on $X$ can be computed
by an analog of the Lefschetz trace formula for the action of
the Frobenius endomorphism $F$ on the $\ell$-adic \'etale
cohomology groups of $X$, $H^i(X)$ (where $\ell$ is any prime not
dividing $q$):
\begin{equation}
\label{lef}
\# X(\bFq) = \sum_{i=0}^{2m} (-1)^i\, {\rm Tr}(F | H^i(X)).
\end{equation}
Moreover, the eigenvalues of $F$ on $H^i(X)$ are algebraic
numbers of absolute value $q^{i/2}$.  When $X$ is obtained
from a variety $Y$ defined over the ring of integers $R$ of some
number field by reduction modulo some prime ideal in $R$, then
the dimensions of the $H^i(X)$ are the same as the topological
Betti numbers of the variety over $\mathbb{C}$ corresponding
to $Y$.

Thus, for instance, if $X$ is a smooth curve of genus
$g$ which is the reduction of a smooth curve $Y$, then
\[
\# X(\bFq) = 1 + q - \sum_{j=0}^{2g} \alpha_j,
\]
where $|\alpha_j| = q^{1/2}$ for all $j$.  The Hasse-Weil bound
often used in the theory of Goppa codes from curves
is a direct consequence:
\[
|\#X(\bFq) - (1 + q)| \le 2g\sqrt{q}.
\]
There is a correspondingly concrete Weil-type bound
for hypersurfaces in $\proj^m$, and this can be used to
derive bounds on the numbers of $\bFq$-rational points
in hyperplane sections as well.  A hypersurface is said
to be \emph{nondegenerate} if it not
contained in any linear subspace of $\proj^m$.

\begin{theorem} \label{hypWeil}
Let $X$ be a smooth nondegenerate hypersurface of degree $s$ in
$\proj^m$, $m \ge 2$. Then
\begin{equation}\label{hyp}
|\# X(\bFq) - (q^{m-1} + \cdots + q + 1)| \le b(s) q^{(m-1)/2},
\end{equation}
where $b(s) = \frac{s - 1}{s} ((s-1)^m - (-1)^m)$
is the middle Betti number of a smooth hypersurface of degree $s$
when $m$ is even, and one less than that number when $m$ is odd.
\end{theorem}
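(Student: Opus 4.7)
The plan is to apply the Lefschetz trace formula (\ref{lef}) to the smooth projective variety $X$ of dimension $m-1$, expressing $\#X(\bFq)$ as the alternating sum of ${\rm Tr}(F\mid H^i(X))$ over $i = 0, 1, \ldots, 2(m-1)$. The strategy is then to identify the contribution from every $H^i(X)$ with $i \neq m-1$ as coming from the ambient projective space, leaving only the middle cohomology as the true source of deviation from the ``main term'' $q^{m-1} + \cdots + q + 1$; Deligne's part of the Weil conjectures then supplies the sharp bound on the remaining eigenvalues.

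First I would invoke the weak Lefschetz (hyperplane) theorem for the smooth hypersurface $X \hookrightarrow \proj^m$: the restriction $H^i(\proj^m) \to H^i(X)$ is an isomorphism for $i < m-1$. Combining this with Poincar\'e duality on $X$, one obtains that for every $i \neq m-1$ the group $H^i(X)$ is one-dimensional if $i$ is even, with Frobenius eigenvalue $q^{i/2}$, and zero if $i$ is odd. Summing these contributions gives exactly $q^{m-1} + q^{m-2} + \cdots + q + 1$ when $m$ is even (so the missing middle degree $m-1$ is odd), and the same sum with the term $q^{(m-1)/2}$ removed when $m$ is odd (so the missing middle degree is even).

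Second, I would treat the middle cohomology $H^{m-1}(X)$ using the Lefschetz decomposition, splitting it into the restriction of a power of the ambient hyperplane class and the primitive part. When $m$ is odd the hyperplane class restricts to a nonzero element with Frobenius eigenvalue exactly $q^{(m-1)/2}$, which precisely fills in the term omitted in the previous step; the primitive complement then has dimension $b(s)$ as defined in the statement. When $m$ is even the middle degree $m-1$ is odd, there is no ambient class to extract, and $H^{m-1}(X)$ is entirely primitive of dimension $b(s)$. In either case, Deligne's theorem guarantees that every eigenvalue of $F$ on the primitive part has absolute value $q^{(m-1)/2}$, so the triangle inequality yields
\[
\bigl|\#X(\bFq) - (q^{m-1} + \cdots + q + 1)\bigr| \le b(s)\, q^{(m-1)/2},
\]
as required.

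The hard part is not any single deep ingredient --- weak Lefschetz, Poincar\'e duality, and Deligne's bound can each be cited --- but rather the bookkeeping across the two parity regimes for $m$, and matching the primitive Betti number with the closed form $\frac{s-1}{s}((s-1)^m - (-1)^m)$. The latter identity follows from a standard Chern class computation of the Euler characteristic of a smooth degree-$s$ hypersurface in $\proj^m$, together with the known Betti numbers in the non-middle degrees recorded above; one may verify it against the classical cases $m=2$ (plane curves of genus $(s-1)(s-2)/2$, so $b(s) = (s-1)(s-2)$) and $m=3$ (cubic surface with $b_2 = 7$, so $b(s) = 6$) as a sanity check.
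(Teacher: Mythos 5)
Your proposal is correct and follows essentially the same route as the paper, which sketches the argument in one sentence: the Lefschetz trace formula combined with the weak Lefschetz theorem and Poincar\'e duality reduces everything to the middle cohomology, whose (primitive) eigenvalues are bounded via the Weil conjectures. Your write-up simply fills in the parity bookkeeping and the identification of $b(s)$ that the paper leaves implicit, and your sanity checks ($m=2$ giving $(s-1)(s-2)=2g$, $m=3$, $s=3$ giving $6$) are consistent with the statement.
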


The inequality (\ref{hyp})
follows from the shape of the cohomology groups
$H^i(X)$ of a smooth hypersurface in $\proj^m$, which (by
the Lefschetz hyperplane theorem and Poincar\'e
duality) look like the corresponding groups for $\proj^{m-1}$,
except possibly in the middle dimension $i = m - 1$.

\begin{example}\label{Curve}
If $m = 2$ and $X$ is a smooth curve of degree $s$ in $\proj^2$, then
\[
b(s) = \frac{s-1}{s}((s-1)^2 - 1) = (s-1)(s-2) = 2g(X)
\]
as expected.  In order to obtain long codes over $\bFq$,
the \emph{maximal curves}, that is,
curves attaining the maximum $\# X(\bFq)$ from (\ref{hyp}),
have been especially intensively studied.  For instance, when
$q = r^2$, the Hermitian curve of degree $s = r+1$ over $\F_{r^2}$,
$X = {\bf V}(x_0^{r+1} + x_1^{r+1} + x_2^{r+1})$,
has $\# X(\F_{r^2}) = r^3 + 1 = 1 + r^2 + r(r - 1)r$.
$\diamondsuit$
\end{example}

\begin{example}\label{Hsurf}
When $m = 3$ and $q = r^2$, the analogous \emph{Hermitian
surfaces} $X = {\bf V}(x_0^{r+1} + x_1^{r+1} + x_2^{r+1} + x_3^{r+1})$
also attain the upper bound from (\ref{hyp}), which reads
\[
\# X(\F_{r^2}) \le 1 + r^2 + r^4 + \frac{r}{r+1}(r^3+1) r^2 =
(r^2 + 1)(r^3 + 1).
\]
The Hermitian surface contains this many distinct
$\F_{r^2}$-rational points because, for instance,
it is possible to take the defining equation to the affine
form
\[
y_1^r + y_1 = y_2^{r+1} + y_3^{r+1}
\]
by a linear change of coordinates that puts a plane tangent
to the surface as the plane at infinity.  Then there are
$r^5$ affine $\F_{r^2}$-rational points ($r$ for each
pair $(y_2,y_3) \in (\F_{r^2})^2$).  There are also
$(r+1)r^2 + 1$ rational points at infinity since the intersection of
the surface with each of its tangent planes at an $\F_{r^2}$-rational
point is the union of
$r+1$ concurrent lines in that plane.  This yields
$r^5 + (r+1)r^2 + 1 = (r^3+1)(r^2+1)$ points as claimed.
$\diamondsuit$
\end{example}

The following result of Lachaud appears in \cite{l96}.

\begin{theorem}\label{hypsecbds}
Let $X$ be a smooth nondegenerate hypersurface of degree $s$ in $\proj^m$
for $m \ge 3$.  Let $H = {\bf V}(f)$ for a linear form
in $\bFq[x_0,\ldots,x_m]$, and let $X_H$ denote the intersection
$X \cap H$ (with the reduced scheme structure).  Then
\begin{equation}\label{hypsec}
|\# X_H(\bFq) - (q^{m-2} + \cdots + q + 1)| \le (s-1)^{m-1} q^{(m-1)/2},
\end{equation}
and
\begin{equation}\label{hypcomp}
|q \# X_H(\bFq) - \# X(\bFq) | \le (s - 1)^{m-1} (q + s - 1)q^{(m-1)/2}.
\end{equation}
\end{theorem}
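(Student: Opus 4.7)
My plan is to derive both inequalities from Theorem~\ref{hypWeil} applied in two different dimensions, together with a Deligne-style weight estimate to handle the fact that $X_H$ need not be smooth. For the first inequality, I would regard $X_H$ as a hypersurface of degree $s$ inside $H \simeq \proj^{m-1}$. When $H$ is transverse to $X$ and $X_H$ is nondegenerate in $H$, Theorem~\ref{hypWeil} applied to $X_H \subset \proj^{m-1}$ bounds $|\# X_H(\bFq) - (q^{m-2} + \cdots + q + 1)|$ by the middle Betti number of a smooth hypersurface of degree $s$ in $\proj^{m-1}$ times $q^{(m-2)/2}$. Since this Betti number is at most $(s-1)^{m-1}$ and $q^{(m-2)/2} \le q^{(m-1)/2}$, the claimed bound is immediate in the generic case. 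For general $H$, where $X_H$ may be singular (when $H$ is tangent to $X$) or may contain linear subvarieties of $X$, one replaces Theorem~\ref{hypWeil} by Deligne's bounds (Weil II) on Frobenius eigenvalues for possibly singular projective varieties, applied via the Lefschetz trace formula for $\ell$-adic \'etale cohomology. The use of $q^{(m-1)/2}$ rather than $q^{(m-2)/2}$ provides exactly the slack needed to absorb contributions from the singular locus uniformly in $H$.

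\textbf{Second inequality.} Writing $\pi_k = 1 + q + \cdots + q^k$, the first inequality reads $\# X_H(\bFq) = \pi_{m-2} + \delta_1$ with $|\delta_1| \le (s-1)^{m-1} q^{(m-1)/2}$, while Theorem~\ref{hypWeil} gives $\# X(\bFq) = \pi_{m-1} + \delta_2$ with $|\delta_2| \le b(s)\, q^{(m-1)/2}$. Since $q \pi_{m-2} = \pi_{m-1} - 1$, this gives
\begin{equation*}
q \# X_H(\bFq) - \# X(\bFq) = q \delta_1 - \delta_2 - 1,
\end{equation*}
and the triangle inequality, combined with the elementary bound $b(s) \le (s-1)^m$ that one reads off from the closed form of $b(s)$ in Theorem~\ref{hypWeil}, yields the desired estimate $(s-1)^{m-1}(q + s - 1)\, q^{(m-1)/2}$.

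\textbf{Main obstacle.} The delicate step is the first inequality when $X_H$ is singular or degenerate in $H$: Theorem~\ref{hypWeil} does not apply directly, and one must invoke Deligne's Weil II to control the Frobenius eigenvalues on the \'etale cohomology of a possibly singular hyperplane section. The uniformity in $H$ required for a usable minimum-distance bound is precisely what forces the slightly weaker exponent $q^{(m-1)/2}$ to appear in place of the $q^{(m-2)/2}$ one would get in the generic smooth case.
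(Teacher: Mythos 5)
Your generic-case reduction is fine, but the heart of the theorem is exactly the case you set aside, and the way you propose to handle it does not work as stated. The paper itself does not deduce Theorem~\ref{hypsecbds} from Theorem~\ref{hypWeil} plus a general weight estimate; it cites Corollary 4.6 and the preceding results of \cite{l96}, whose content is a comparison of the $\ell$-adic cohomology of the smooth $X$ with that of the possibly singular section $X_H$. Such a comparison is unavoidable: the inequality (\ref{hypsec}) is simply \emph{false} for an arbitrary singular hypersurface of degree $s$ in $\proj^{m-1}$, so no argument that treats $X_H$ only as a singular projective variety of degree $s$ (``Weil II plus slack'') can prove it. Concretely, a union of $s$ hyperplanes of $\proj^{m-1}$ through a common codimension-two linear subspace has $(s-1)q^{m-2}$ more $\bFq$-points than $q^{m-2}+\cdots+q+1$, and for $m \ge 4$ and $q$ large this exceeds $(s-1)^{m-1}q^{(m-1)/2}$. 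Deligne's bounds do give weight $\le i$ on $H^i_c(X_H)$, but the deviation of $\# X_H(\bFq)$ from $q^{m-2}+\cdots+1$ is governed by how far the cohomology of $X_H$ differs from that of $\proj^{m-2}$ in \emph{all} degrees, including the top ones (extra classes of weight $2(m-2)$ coming from reducibility or positive-dimensional singular loci), and these contribute terms of size $q^{m-2}$ that a single factor of $q^{1/2}$ cannot absorb. What rescues the statement is precisely the hypothesis you are not using: $X_H$ is a hyperplane section of a \emph{smooth} hypersurface, so its cohomology can be compared with that of $X$ (a Lefschetz-type comparison together with a bound on the dimension and weights of the discrepancy caused by the singularities of $X_H$). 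Supplying that comparison is the missing idea, and it is the actual content of the results of \cite{l96} that the paper points to.

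A smaller point: your triangle-inequality derivation of (\ref{hypcomp}) from (\ref{hypsec}) and Theorem~\ref{hypWeil} leaks the additive constant coming from $q(q^{m-2}+\cdots+1) = (q^{m-1}+\cdots+1) - 1$. To reach the stated constant you need $b(s)\,q^{(m-1)/2} + 1 \le (s-1)^m q^{(m-1)/2}$, and in the corner case $s=2$ with $m$ odd one has $b(s) = (s-1)^m = 1$, so this route falls short (by that same additive constant). In \cite{l96} both (\ref{hypsec}) and (\ref{hypcomp}) come out of the direct cohomological comparison of $X$ and $X_H$, not from combining two separate point-count bounds.
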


These bounds are proved by comparing the cohomology of $X$ and $X_H$, taking
into account possible singularities of $X_H$.
For a proof, see Corollary 4.6 and preceding results of \cite{l96}.

When ${\mathcal S}$ is the full set of $\bFq$-rational
points on $X$, so $n = \# {\mathcal S}$ for the
$C(X; {\mathcal S})$ code and $H$ is a general hyperplane,
these imply the following bounds
on $\# (H \cap {\mathcal S})$.
(\ref{hypcomp}) implies
\begin{equation}\label{genbd1}
\left|(n - \#(H\cap {\mathcal S})) - \frac{(q-1)}{q} n\right|
\le (s - 1)^{m-1} (q + s - 1)q^{(m-1)/2}
\end{equation}
and
\begin{equation}\label{genbd2}
\left| (n - \#(H\cap {\mathcal S}))- q^{m-2}\right|
\le s(s-1)^{m-1}q^{(m-1)/2}.
\end{equation}
These, together with (\ref{hypsec}), give universally applicable
lower bounds on $d$ by applying (\ref{mindist}).

As is perhaps to be expected, it is often possible to derive tighter
bounds in specific cases by taking
the properties of $X$ into account.

\section{Examples}\label{exs}

This section will consider codes produced
according to the constructions from Section~\ref{constr} from
various special classes of varieties.  The particular
varieties used here are all examples of varieties
with many rational points over finite fields $\bFq$.
The examples are ordered according to the algebraic
geometric prerequisites needed for the construction.

\subsection{Quadrics}

First consider the $C(X)$
codes from quadric hypersurfaces $X = {\bf V}(f)$
for homogeneous $f$ of degree 2 in $\bFq[x_0,\ldots,x_m]$.
The following statements are proved, for instance, in Chapter 22 of
\cite{ht91}.
Up to projective equivalence over $\bFq$,
such $X$ are completely described by a positive integer
called the \emph{rank} and a second integer called
the \emph{character}, which takes values in the finite
set $\{0,1,2\}$.
The rank, denoted $\rho$, can be described as
the minimum number of variables needed to express $f$ after
a linear change of coordinates in $\proj^m$.  $X$ is
said to be \emph{nondegenerate} if $\rho = m+1$.  Nondegenerate
quadrics are always smooth varieties.
Degenerate quadrics are singular, but they are cones over nondegenerate
quadrics in a linear subspace of $\proj^m$.  Hence in principle
it suffices to study nondegenerate quadrics and we will consider
only that case here.   The character, denoted $w$, is most easily
described by considering a finite set of possible
normal forms for $f$.

If $m$ is even, then every nondegenerate quadric can be
taken to the form
\[
x_0^2 + x_1x_2 + x_3x_4 + \cdots + x_{m-1}x_m.
\]
${\bf V}(f)$ is called a \emph{parabolic} quadric in this case,
and the character $w$ is defined to be $1$.

On the other hand, if $m$ is odd, there are two distinct possible forms:
\begin{eqnarray*}
& x_0 x_1 + x_2x_3 + \cdots + x_{m-1} x_m & \text{or}\\
& q(x_0,x_1) + x_2 x_3 + \cdots + x_{m-1}x_m.
\end{eqnarray*}
In the first case, ${\bf V}(f)$ is called a \emph{hyperbolic}
quadric and $w = 2$.  In the second, $q(x_0,x_1)$ is a quadratic
form in two variables which can be further reduced to slightly
different normal forms depending on whether $q$ is even or odd.
For both even and odd $q$, in the second case, ${\bf V}(f)$ is
called a \emph{elliptic} quadric and $w = 0$.

\begin{theorem}\label{quad}
Let $X$ be a nondegenerate quadric in $\proj^m$ with
character $w$.  Then
\begin{eqnarray*}
\# X(\bFq) &=& \frac{(q^{(m+1-w)/2} + 1)(q^{(m - 1 + w)/2} - 1)}{q-1}\\
&=& q^{m-1} + \cdots + q + 1 + (w-1)q^{(m-1)/2}.
\end{eqnarray*}
\end{theorem}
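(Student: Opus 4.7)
The plan is to reduce the count to affine zeros of the defining form and then handle each of the three normal forms in turn. Writing $X = {\bf V}(Q)$ for one of the normal forms listed just above the theorem, every $\bFq$-rational point of $X$ has exactly $q-1$ nonzero affine representatives in $\bFq^{m+1}$, so
\[
\#X(\bFq) = \frac{N(Q)-1}{q-1}, \qquad N(Q) := \#\{v \in \bFq^{m+1} : Q(v) = 0\}.
\]
It therefore suffices to compute $N(Q)$ for each normal form and then to verify algebraically that the result matches both displayed expressions.

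All three cases are built out of a common hyperbolic block $H_r(x) = x_1 x_2 + \cdots + x_{2r-1} x_{2r}$, so the first step is to compute $N_r(c) := \#\{x \in \bFq^{2r} : H_r(x) = c\}$ by induction on $r$. Two observations drive the recursion: $xy = c$ has $2q-1$ solutions when $c = 0$ and $q-1$ otherwise, and the rescaling $x_1 \mapsto \alpha x_1$ shows that $N_r(c) = B_r$ is independent of $c \in \bFq^*$. Splitting $N_r(0) = \sum_{(x_1,x_2)} N_{r-1}(-x_1 x_2)$ accordingly and eliminating $B_{r-1}$ through $N_{r-1}(0) + (q-1)B_{r-1} = q^{2(r-1)}$ collapses to the one-step recursion $N_r(0) = q\, N_{r-1}(0) + (q-1)q^{2r-2}$, which with $N_1(0) = 2q-1$ gives by induction
\[
N_r(0) = q^{2r-1} + q^r - q^{r-1}, \qquad B_r = q^{2r-1} - q^{r-1}.
\]

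Each of the three cases then reduces to a one-line computation. In the hyperbolic case ($m$ odd, $Q = H_{(m+1)/2}$), $N(Q) = N_{(m+1)/2}(0)$ directly. In the elliptic case ($m$ odd, $Q = q(x_0,x_1) + H_{(m-1)/2}$), I would invoke the fact that any anisotropic binary form over $\bFq$ is equivalent up to scaling to the norm form of the quadratic extension $\F_{q^2}/\bFq$, so it vanishes only at the origin and represents each nonzero value of $\bFq$ exactly $q+1$ times; stratifying by the value of $-q(x_0,x_1)$ yields $N(Q) = N_r(0) + (q^2-1)B_r$. In the parabolic case ($m$ even, $Q = x_0^2 + H_{m/2}$), the identity $N(Q) = \sum_{x_0 \in \bFq} N_r(-x_0^2)$ collapses immediately to $N_r(0) + (q-1)B_r = q^m$, since $-x_0^2$ is nonzero for $x_0 \ne 0$ and therefore contributes $B_r$ each time. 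Plugging the closed forms for $N_r(0)$ and $B_r$ into each case and dividing $N(Q) - 1$ by $q-1$ produces $(q^m-1)/(q-1) + (w-1)q^{(m-1)/2}$ uniformly, and the algebraic identity
\[
(q^{(m+1-w)/2}+1)(q^{(m-1+w)/2}-1) = q^m - 1 + (w-1)(q-1)q^{(m-1)/2},
\]
verified by direct expansion, reconciles this with the first displayed form. The only slightly nonelementary input is the value distribution of the anisotropic binary form, on which the elliptic case hinges; everything else is routine case-counting plus the hyperbolic recursion, so the main obstacle is really just keeping the bookkeeping straight across the three normal forms.
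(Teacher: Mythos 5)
Your proof is correct and essentially complete, and it necessarily takes a different route from the paper, because the paper offers no proof of Theorem~\ref{quad} at all: it simply cites Chapter~22 of Hirschfeld--Thas \cite{ht91} for the classical finite-geometry count. Your argument --- pass to affine zero counts via $\#X(\bFq)=(N(Q)-1)/(q-1)$, compute the value distribution of the hyperbolic block $H_r$ by the recursion $N_r(0)=qN_{r-1}(0)+(q-1)q^{2r-2}$, and then stratify the parabolic and elliptic normal forms by the value of $x_0^2$, respectively of the anisotropic binary form, using that the latter is a scalar multiple of the norm form of $\F_{q^2}/\bFq$ (so it takes the value $0$ once and every nonzero value $q+1$ times) --- is the standard self-contained counting proof; it works uniformly in odd and even characteristic, and the closed forms $N_r(0)=q^{2r-1}+q^r-q^{r-1}$, $B_r=q^{2r-1}-q^{r-1}$, the three case computations, and the reconciling identity $(q^{(m+1-w)/2}+1)(q^{(m-1+w)/2}-1)=q^m-1+(w-1)(q-1)q^{(m-1)/2}$ all check out. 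One small repair is needed: the independence of $N_r(c)$ from $c\in\bFq^{*}$ is not given by rescaling $x_1$ alone, since for $r\ge 2$ the map $x_1\mapsto\alpha x_1$ changes only the first summand of $H_r$ and so does not carry the level set $\{H_r=c\}$ onto $\{H_r=\alpha c\}$; instead rescale all odd-indexed variables $x_1,x_3,\ldots,x_{2r-1}$ by $\alpha$, which multiplies $H_r$ by $\alpha$ and gives the required bijection (this independence is also what justifies writing $N_{r-1}(-x_1x_2)=B_{r-1}$ in the recursion, so it should be stated before the splitting). With that fix your argument is a complete proof, arguably more informative than the paper's citation, at the modest cost of invoking the classification of anisotropic binary forms over $\bFq$.
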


In particular, this result says that hyperbolic and parabolic quadrics
attain the upper bound from (\ref{hyp}) with $s = 2$,
and elliptic quadrics attain the lower bound.

Because each linear section of $X$ is also a quadric
in a lower-dimensional space, Theorem~\ref{quad} can be used
to determine the full weight distributions of the $C(X)$ codes.
In particular,

\begin{theorem}\label{quadth}
The $C(X)$ code from a smooth quadric $X$ in $\proj^m$ has
$n$ given in Theorem~\ref{quad}, $k = m+1$ and 
\begin{equation}\label{quadd}
d = \begin{cases}
    q^{m-1} & \text{if\ } w = 2\\
    q^{m-1} - q^{(m-2)/2} & \text{if\ } w = 1\\
    q^{m-1} - q^{(m-1)/2} & \text{if\ } w = 0.\\
    \end{cases}
\end{equation}
\end{theorem}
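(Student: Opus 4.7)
The block length equals the point count of Theorem~\ref{quad}, and the dimension is immediate: since $X$ is nondegenerate, no nonzero linear form in $\bFq[x_0,\ldots,x_m]$ vanishes on $X$, so the evaluation map restricted to the $(m+1)$-dimensional space of linear forms is injective, giving $k=m+1$.

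For the minimum distance, (\ref{mindist}) reduces the problem to computing
\[
M=\max_H \#(X\cap H)(\bFq)
\]
over $\bFq$-rational hyperplanes $H$. The section $X\cap H$ is itself a quadric in $H\simeq\proj^{m-1}$, and smoothness of $X$ forces its rank to be either $m$ (so $X\cap H$ is a smooth nondegenerate quadric in $\proj^{m-1}$) or $m-1$ (so $H=T_pX$ is tangent to $X$ at a unique point $p\in X(\bFq)$, and $X\cap H$ is a cone with vertex $p$ over a smooth quadric $Q'$ in a $\proj^{m-2}\subset H$). In the tangent case the cone is the union of the $\#Q'(\bFq)$ distinct projective lines through $p$, pairwise meeting only at $p$, so
\[
\#(X\cap H)(\bFq)=1+q\cdot\#Q'(\bFq).
\]
A short computation with the bilinear form $B$ attached to $X$ identifies $Q'$: the radical of $B|_{T_pX}$ is the line $\langle p\rangle$, so the Witt index drops by exactly one on passing to the quotient quadric, and $Q'$ therefore has the same character $w$ as $X$.

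Having classified the possible sections, it remains to apply Theorem~\ref{quad} to count points in each case and to maximize. A short case analysis on $w\in\{0,1,2\}$ shows that the maximum $M$ is attained by a smooth hyperbolic section when $w=1$, a smooth parabolic section when $w=0$, and a tangent section (whose base $Q'$ is itself hyperbolic) when $w=2$. Subtracting $M$ from the value of $n$ given by Theorem~\ref{quad} then yields the three formulas of~(\ref{quadd}). The principal obstacle is the sign bookkeeping: several competing point counts differ only by correction terms $\pm q^{(m-i)/2}$ of sub-maximal order, so one has to track signs carefully, and one must also verify that $\bFq$-rational hyperplanes of each optimal type actually exist, which follows from the transitivity of the associated orthogonal group on rational points of $X$ together with the point-count estimates of Theorem~\ref{quad} applied to the candidate sections in $\proj^{m-1}$.
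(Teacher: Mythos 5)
Your proposal is correct and follows essentially the same route the paper only sketches: every hyperplane section of a smooth quadric is again a quadric (smooth, or a tangent cone over a smooth quadric of the same character), so Theorem~\ref{quad} determines the maximal number of $\bFq$-points on a section, and subtracting from $n$ via (\ref{mindist}) yields (\ref{quadd}), with your identification of the maximizing sections (tangent cone over a hyperbolic base for $w=2$, smooth hyperbolic for $w=1$, smooth parabolic for $w=0$) agreeing with the paper's remarks after the theorem. The only caveat worth flagging is that in characteristic $2$ the parabolic case should be argued with the quadratic form itself rather than the associated bilinear form $B$ (which then acquires a nucleus), but the standard classification in Hirschfeld--Thas, which the paper cites, gives exactly the conclusion you use, namely that the tangent section is a cone over a nondegenerate quadric of the same character.
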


For instance, if $m$ is even, so $w = 1$ (the parabolic case), the
hyperplane section of $X$ containing the most $\bFq$-rational points
will be a hyperbolic section and $d$ is as above.  When $w = 2$
(for example, for codes from hyperbolic quadrics in $\proj^3$),
the minimum weight codewords come from hyperplane sections that are
degenerate quadrics.  

The same sort of reasoning has also be used by Nogin and Wan
to determine the
complete hierarchy of generalized Hamming weights
$d_1(C(X)) , \ldots , d_k(C(X))$.
  The results are somewhat intricate
to state, though, so we refer the interested reader to the articles
\cite{wan93,n93} and the notes in Section~\ref{bibnotes}.

For the $C_h(X)$ codes with $h \ge 2$, the dimension can be
estimated using (\ref{sheaf}), where ${\mathcal I}_X(h) \simeq {\mathcal O}_{\proj^m}(h-2)$.
This yields
\[
k \le \binom{m + h}{h} - \binom{m + h - 2}{h - 2}.
\]

\subsection{Hermitian hypersurfaces}

For the $C(X)$ codes constructed from the Hermitian surfaces
of Example~\ref{Hsurf} with $q = r^2$, (\ref{hypsec}) gives
\[
d \ge (r^2 + 1)(r^3 + 1) - (r^2 + 1 + r^4) = r^5 - r^4 + r^3.
\]
However, closer examination of the hyperplane sections of the
Hermitian surface yields the following statement.

\begin{theorem}\label{hermsurfbound}
Let
$X = {\bf V}(x_0^{r+1} + x_1^{r+1} + x_2^{r+1} + x_3^{r+1})$
be the Hermitian surface over $\F_{r^2}$.  The $C(X)$ code
on ${\mathcal S} = X(\F_{r^2})$ has parameters
\[
\left[(r^2+1)(r^3+1), 4, r^5 \right].
\]
\end{theorem}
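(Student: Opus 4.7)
The length $n = (r^2+1)(r^3+1)$ was already computed in Example~\ref{Hsurf}, so the plan has two remaining parts: show the evaluation map on linear forms is injective, and identify the hyperplane section of $X$ that contains the largest number of $\F_{r^2}$-rational points.

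For the dimension, the plan is to observe that $X$ is a nondegenerate hypersurface in $\proj^3$: no nonzero linear form $f\in\F_{r^2}[x_0,x_1,x_2,x_3]$ can vanish on $X$, since ${\bf V}(f)$ would be a hyperplane containing the irreducible surface $X$, which is impossible as $\deg X = r+1 \ge 2$. Therefore the kernel of $ev_{\mathcal S}$ on ${\mathcal F}_1$ is zero and $k = \dim {\mathcal F}_1 = 4$.

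The main task is to compute $d$. By (\ref{mindist}), $d = n - \max_H \#(X\cap H)(\F_{r^2})$, where $H$ ranges over $\F_{r^2}$-rational hyperplanes of $\proj^3$. The plan is to classify the plane sections $X\cap H$ up to $\F_{r^2}$-projective equivalence. There are essentially two cases: either $H$ is a tangent plane of $X$ at some $\F_{r^2}$-rational point $P\in X$, or it is not. In the non-tangent case, $X\cap H$ is a nondegenerate Hermitian curve of degree $r+1$ in $H\simeq\proj^2$, and the Hasse--Weil (indeed maximal-curve) count gives $\#(X\cap H)(\F_{r^2}) = r^3 + 1$. In the tangent case, the description already given in Example~\ref{Hsurf} shows $X\cap H$ is the union of $r+1$ concurrent $\F_{r^2}$-rational lines meeting at $P$, contributing
\[
(r+1)(r^2 + 1) - r = r^3 + r^2 + 1
\]
rational points (each line has $r^2+1$ points and they share only $P$). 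Since $r^3 + r^2 + 1 > r^3 + 1$, the maximum is attained by tangent plane sections.

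The main obstacle is justifying the dichotomy of plane sections; I expect to handle it by combining two standard ingredients. First, the tangent plane at an $\F_{r^2}$-rational point $P$ cuts $X$ in a degree-$(r+1)$ curve that is singular at $P$ (the Hermitian form becomes degenerate when restricted to the tangent plane), and an explicit calculation in normal form — using the affine model $y_1^r + y_1 = y_2^{r+1} + y_3^{r+1}$ of Example~\ref{Hsurf} — shows it splits as a pencil of $r+1$ lines through $P$. Second, any non-tangent $\F_{r^2}$-rational plane section is a smooth plane curve of degree $r+1$ projectively equivalent over $\F_{r^2}$ to a nondegenerate Hermitian curve, hence has exactly $r^3+1$ rational points by Example~\ref{Curve}. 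Putting this together,
\[
d = n - (r^3 + r^2 + 1) = (r^5 + r^3 + r^2 + 1) - (r^3 + r^2 + 1) = r^5,
\]
as claimed.
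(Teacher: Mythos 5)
Your proposal is correct and follows essentially the same route as the paper: the paper's proof likewise rests on the dichotomy that every $\F_{r^2}$-rational plane meets $X$ either in a nondegenerate Hermitian curve with $r^3+1$ points or in $r+1$ concurrent lines with $(r+1)r^2+1 = r^3+r^2+1$ points, and then applies (\ref{mindist}) to get $d = n - (r^3+r^2+1) = r^5$. Your extra details (tangent versus non-tangent planes, the degenerate restriction of the sesquilinear form, the affine normal form, and the Hasse--Weil count) are exactly the standard justification the paper defers to its later discussion of Hermitian hypersurfaces and to \cite{ht91}, so nothing essential is missing.
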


\begin{proof}
Every $\F_{r^2}$-rational plane in $\proj^3$
intersects $X$ either in a Hermitian curve containing $r^3 + 1$
points over $\F_{r^2}$, or else in $r+1$ concurrent lines
containing $(r+1)r^2 + 1$ points.  Hence by (\ref{mindist}),
\[
d = n - ((r+1)r^2+1) = r^5.
\]
\end{proof}

The $C_h(X)$ codes with $h > 1$ are more subtle here.

\begin{theorem} Let $X$ and ${\mathcal S}$ be as in Theorem~\ref{hermsurfbound}.
If $h < r + 1$, the $C_h(X)$ code has parameters
\[
\left[(r^2+1)(r^3+1), \binom{4 + h}{h}, d \ge n - h(r+1)(r^2+1) \right].
\]
\end{theorem}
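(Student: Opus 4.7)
The plan is to reduce all three parameter assertions to the single geometric estimate
\[
\# C(\F_{r^2}) \;\le\; h(r+1)(r^2+1) \qquad \text{for every nonzero } f \in {\mathcal F}_h,
\]
where $C := X \cap {\bf V}(f)$. Since $h < r+1$ makes this bound strictly less than $n = \#X(\F_{r^2}) = (r^2+1)(r^3+1)$, the evaluation map $ev_{\mathcal S}$ will have trivial kernel, yielding the claimed block length and the full dimension of ${\mathcal F}_h$, while the minimum-distance bound follows at once from (\ref{mindist}).

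First, I would record that the Fermat-type polynomial $F = x_0^{r+1}+x_1^{r+1}+x_2^{r+1}+x_3^{r+1}$ is irreducible (smoothness forces this for a hypersurface in $\proj^3$) and generates the homogeneous ideal of $X$, so the hypothesis $h < r+1$ ensures that no nonzero $f \in {\mathcal F}_h$ vanishes identically on $X$. By Bezout, $C$ is therefore a pure one-dimensional subvariety of $\proj^3$ of degree $h(r+1)$.

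The heart of the argument exploits the line structure of $X$. From the hyperplane-section analysis used in the proof of Theorem~\ref{hermsurfbound}, the tangent plane to $X$ at each $P \in X(\F_{r^2})$ meets $X$ in $r+1$ concurrent lines; in an affine normal form $y_1^r + y_1 = y_2^{r+1}+y_3^{r+1}$ at the origin, the tangent-plane section is cut out by $y_2^{r+1}+y_3^{r+1}$, so these lines have equations $y_2 = \zeta y_3$ with $\zeta^{r+1}=-1$. Since $-1 \in \F_r^{\,*} = (\F_{r^2}^{\,*})^{r+1}$, all $r+1$ solutions $\zeta$ lie in $\F_{r^2}$, and every such tangent line is itself $\F_{r^2}$-rational. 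A standard incidence count then produces exactly $M := (r+1)(r^3+1)$ $\F_{r^2}$-rational lines on $X$, each containing $r^2+1$ rational points.

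Finally I would count the pairs $(P,L)$ with $P \in C(\F_{r^2})$ and $L$ an $\F_{r^2}$-rational line of $X$ through $P$ in two ways. Writing $\alpha$ for the number of such lines contained in $C$ (so $\alpha \le h(r+1)$ by Bezout) and noting that any $L \not\subset C$ meets $C$ in at most $h$ points, since $f|_L$ is then a nonzero form of degree $h$ on $L \simeq \proj^1$, one obtains
\[
(r+1)\,\# C(\F_{r^2}) \;\le\; \alpha(r^2+1) + (M-\alpha)\,h.
\]
Because $r^2+1 > h$, the right-hand side is maximized at $\alpha = h(r+1)$, yielding after simplification $\# C(\F_{r^2}) \le h(r^3+r^2+2-h) \le h(r+1)(r^2+1)$, as required. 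The main obstacle is the rationality step: once one knows that all lines in a tangent-plane section are defined over $\F_{r^2}$, the double count is routine, and the hypothesis $h < r+1$ is used precisely to force $r^2+1 > h$ so that the linear bound in $\alpha$ is indeed maximized at the Bezout limit $\alpha = h(r+1)$.
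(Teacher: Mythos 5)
Your argument is correct, but it follows a genuinely different route from the paper's. The paper's proof is a one-line application of the elementary degree bound of Theorem~\ref{elembound} (i.e.\ the linear-projection estimate (\ref{linproj})): since $X \cap {\bf V}(f)$ is a curve of degree $h(r+1)$ in $\proj^3$, it has at most $h(r+1)(r^2+1)$ points over $\F_{r^2}$, and $h<r+1$ gives injectivity of the evaluation map, hence the stated $n,k,d$. You instead exploit the line geometry of the Hermitian surface --- the $(r+1)(r^3+1)$ rational lines, each with $r^2+1$ rational points and exactly $r+1$ of them through every rational point --- and run a double count in the spirit of the covering-family bound of Theorem~\ref{Hansencurves}, with Bezout controlling the number of lines absorbed into $X\cap{\bf V}(f)$. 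What this buys you is twofold: your count yields the slightly sharper estimate $\#\bigl(X\cap{\bf V}(f)\bigr)(\F_{r^2}) \le h(r^3+r^2+2-h)$, which is exact for $h=1$ (recovering the $r^3+r^2+1$ of Theorem~\ref{hermsurfbound}) and moves in the direction of S\o rensen's conjectured bound (\ref{Sorconj}); and it avoids the degree hypothesis $s<q+1$ built into the projection argument behind (\ref{linproj}), which is not literally satisfied here once $h(r+1) \ge r^2+1$ (e.g.\ $h$ near $r$). The price is that you must justify the facts about rational lines on $X$ (rationality of the $r+1$ tangent-plane lines via the norm map, the incidence count for the total number of lines), which your sketch does adequately; the remaining steps (injectivity of $ev_{\mathcal S}$ from the strict inequality with $n$, and the passage to $d$ via (\ref{mindist})) match the paper's logic.
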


\begin{proof}
This bound follows from Theorem~\ref{elembound}
by the fact that if $f$ is a form of degree
$h$, then ${\bf V}(f) \cap X$ is a curve of degree $\delta = h(r+1)$
in $\proj^3$. The hypothesis on $h$ implies that the evaluation mapping is injective.
For larger $h$, (\ref{sheaf}) would be used to determine the dimension
of the space of forms of degree $h$ vanishing on the Hermitian variety.
\end{proof}

An even tighter bound
\begin{equation}\label{Sorconj}
d \ge n - (h(r^3 + r^2 - r) + r + 1)
\end{equation}
has been conjectured by S\o rensen for these codes in \cite{so91a}.

The Hermitian curve and surface codes can be generalized as follows.
(see Chapter 23 of\cite{ht91}).  Over a field
of order $q = r^2$, consider the Hermitian hypersurface in $\proj^m$
defined by
\begin{equation}\label{hermhyp}
X = {\bf V}(x_0^{r + 1} + x_1^{r + 1} + \cdots + x_m^{r + 1}).
\end{equation}
The mapping $F(x) = x^r$ is a involutory field automorphism of $\F_{r^2}$, analogous
to complex conjugation in $\C$, and the homogeneous polynomial
defining $X$ is analogous to the usual Hermitian form on $\C^{m+1}$
given by $x_0 \overline{x_0} + \cdots + x_m \overline{x_m}$.
The defining polynomial of $X$ may be understood as $H(x,x)$ for the
mapping $H : \F_{r^2}^{\,m+1} \times \F_{r^2}^{\,m+1} \to \F_{r^2}$
given by
\[
H(x,y) = x_0 y_0^r + \cdots + x_m y_m^r.
\]
It is clear that $H$ is additive in each variable and satisfies
$H(\lambda x,y) = \lambda H(x,y)$ and
$H(x,\lambda y) = \lambda^r H(x,y) = F(\lambda)H(x,y)$
for the automorphism $F$ above.   Hence $H$ is an example of what
is known as a \emph{sesquilinear form} on $\F_{r^2}^{\,m + 1} \times \F_{r^2}^{\,m + 1}$.
It can be shown that after a linear change of coordinates defined
over $\F_{r^2}$,
any sesquilinear $H$ on $V \times V$, where $V$ is a finite-dimensional
$\F_{r^2}$-vector space, can be expressed as
\begin{equation}\label{hermform}
H(x,y) = x_0 y_0^r + \cdots + x_\ell y_\ell^r
\end{equation}
for some $\ell \le \dim V$.
$H$ is said to be nondegenerate if $\ell = \dim V$ and
degenerate otherwise.

It follows that every linear section $L\cap X$ of a Hermitian hypersurface
is also a Hermitian variety in the linear subspace $L = \proj W$ for
some vector subspace $W$.  Moreover,
if the section is degenerate (i.e. $\ell < \dim W$ in (\ref{hermform})),
then the section is a cone over a nondegenerate
Hermitian variety in a linear subspace of $L$.  Thus,
the properties of the codes $C(X)$ from the
Hermitian hypersurfaces are formally quite similar to (and even
somewhat simpler than) the properties of codes from quadrics
discussed above.  The main ingredient is the following
statement for the nondegenerate Hermitian hypersurfaces.

\begin{theorem}\label{hypcount}
Let $X$ be the nondegenerate Hermitian hypersurface from (\ref{hermhyp}).
Then
\[
\# X(\F_{r^2}) = r^{2m-2} + \cdots + r^2 + 1 + b(r+1) r^{m-1},
\]
where $b(r+1) = \frac{r}{r+1} (r^m - (-1)^m)$.
\end{theorem}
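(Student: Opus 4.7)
The plan is to proceed by induction on $m$, decomposing $X(\F_{r^2})$ by whether or not the last coordinate vanishes. Choose the hyperplane $H = {\bf V}(x_m)$; then $X \cap H$ is cut out in $H \simeq \proj^{m-1}$ by $x_0^{r+1} + \cdots + x_{m-1}^{r+1} = 0$, another nondegenerate Hermitian hypersurface (in one lower dimension), so its point count is supplied by the induction hypothesis. The complementary affine piece consists of points that can be normalized by $x_m = 1$, i.e. solutions $(x_0,\ldots,x_{m-1}) \in \F_{r^2}^{\,m}$ of
\[
x_0^{r+1} + \cdots + x_{m-1}^{r+1} = -1.
\]
The base case $m = 0$ (no points) is immediate, and $m = 1$ gives the $r+1$ solutions $(x_0 : 1)$ with $x_0^{r+1} = -1$, matching the formula.

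The key step is counting the affine solutions. For $x \in \F_{r^2}$ one has $x^{r+1} = N(x)$, where $N = N_{\F_{r^2}/\F_r}$ is the norm; thus $N$ takes values in $\F_r$, with $N^{-1}(0) = \{0\}$ and $N$ being $(r+1)$-to-one on $\F_{r^2}^{\,*}$. The affine count therefore equals the number of $m$-tuples in $\F_{r^2}^{\,m}$ whose norms sum to $-1$ in $\F_r$. I would evaluate this via additive Fourier analysis on $\F_r$: for any nontrivial additive character $\chi$ of $\F_r$,
\[
S(\chi) \;=\; \sum_{x \in \F_{r^2}} \chi(N(x)) \;=\; 1 + (r+1)\sum_{c \in \F_r^{\,*}} \chi(c) \;=\; 1 - (r+1) \;=\; -r,
\]
while the trivial character gives $r^2$. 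By Fourier inversion on $\F_r$, the number of solutions to $\sum_{i=0}^{m-1} N(x_i) = -1$ is
\[
\frac{1}{r}\Bigl(r^{2m} + \sum_{\chi \neq 1} \overline{\chi}(-1)\,S(\chi)^m\Bigr)
\;=\; \frac{1}{r}\bigl(r^{2m} - (-r)^m\bigr)
\;=\; r^{2m-1} - (-1)^m r^{m-1}.
\]

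Combining the two pieces yields the recursion
\[
N_m \;=\; N_{m-1} + r^{2m-1} - (-1)^m r^{m-1},
\]
and the final step is the routine verification that the claimed closed form $r^{2m-2} + \cdots + r^2 + 1 + \tfrac{r}{r+1}(r^m - (-1)^m)\,r^{m-1}$ satisfies this recursion, after which the induction closes. The only genuinely delicate point is the character-sum evaluation of $S(\chi)$ — in particular, tracking the sign $(-1)^m$ that eventually produces the alternating term in $b(r+1)$; the rest is bookkeeping. As a sanity check, the resulting count achieves equality in the Weil bound of Theorem~\ref{hypWeil} applied with $s = r+1$ and $q = r^2$, confirming that nondegenerate Hermitian hypersurfaces are maximal in every dimension.
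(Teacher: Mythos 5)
Your argument is correct: the norm map $N(x)=x^{r+1}$ is $(r+1)$-to-one onto $\F_r^{\,*}$ with $N^{-1}(0)=\{0\}$, so $S(\chi)=-r$ for nontrivial additive $\chi$, the affine count $r^{2m-1}-(-1)^m r^{m-1}$ follows by orthogonality, and one checks readily that the closed form satisfies the recursion $N_m=N_{m-1}+r^{2m-1}-(-1)^m r^{m-1}$ with the right value at $m=1$. This is, however, a genuinely different route from the paper: the paper does not prove Theorem~\ref{hypcount} at all, but quotes it from Chapter 23 of Hirschfeld--Thas, where the count is obtained from the finite-geometry theory of Hermitian polarities and their linear sections; the only computation actually carried out in the paper is the surface case $m=3$ (Example~\ref{Hsurf}), done geometrically by moving a tangent plane to infinity so the affine equation becomes $y_1^r+y_1=y_2^{r+1}+y_3^{r+1}$, counting $r^5$ affine points and $(r+1)r^2+1$ points at infinity. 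Your decomposition along the coordinate hyperplane $x_m=0$ plus the character-sum evaluation gives a short, self-contained, uniform-in-$m$ (and characteristic-free, including $p=2$ where $-1=1$) proof, at the cost of being less geometric: the tangent-plane picture in the paper also exhibits the structure of the special hyperplane sections ($r+1$ concurrent lines), which is what later drives the minimum-distance computation in Theorem~\ref{hermsurfbound}, whereas your count by itself only certifies maximality with respect to the Weil bound of Theorem~\ref{hypWeil}.
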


In other words, for all $m$, the nondegenerate
Hermitian hypersurfaces meet the upper bound
from (\ref{hyp}) for a hypersurface of degree $s = r + 1$.

\begin{theorem}\label{hermhypcodes}
Let ${\mathcal S} = X(\F_{r^2})$ for the nondegenerate Hermitian hypersurface
$X$ in $\proj^m$.  The $C(X;{\mathcal S})$ code has $n$ given in 
Theorem~\ref{hypcount}, $k = m + 1$, and
\[
d = \begin{cases} 
     r^{2m-1} - r^{m - 1} & \text{\ if\ } m \equiv 0 \bmod 2\\
     r^{2m-1} & \text{\ if\ } m \equiv 1 \bmod 2.
    \end{cases}
\]
\end{theorem}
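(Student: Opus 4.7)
The plan is to compute $d$ by classifying the $\F_{r^2}$-rational hyperplane sections of $X$ into two types using the sesquilinear form, counting $\F_{r^2}$-rational points in each type, and identifying which type dominates. First I would establish $k = m+1$. Since $\ker ev_{\mathcal S}$ consists of linear forms vanishing on all of ${\mathcal S}$, it suffices to show that ${\mathcal S}$ is not contained in any hyperplane. If ${\mathcal S} \subseteq \Pi$, then the Hermitian variety $X \cap \Pi$ in $\Pi \cong \proj^{m-1}$ would contain every $\F_{r^2}$-rational point of $X$; the point counts carried out below yield $\#(X \cap \Pi)(\F_{r^2}) < n$ for every $\F_{r^2}$-rational $\Pi$, giving a contradiction.

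For the minimum distance, (\ref{mindist}) reduces the problem to maximizing $\#(X \cap \Pi)(\F_{r^2})$ over $\F_{r^2}$-rational hyperplanes $\Pi$. The key step is a classification using the nondegenerate sesquilinear form $H$ defining $X$: every hyperplane $W$ in the underlying vector space $V = \F_{r^2}^{m+1}$ is $v^\perp$ for some $v \in V$ unique up to scalar (by nondegeneracy of $H$), and the radical of $H|_W$ equals $W \cap \langle v \rangle$. This radical is zero when $H(v,v) \ne 0$ (i.e., $v \notin X$) and one-dimensional when $v \in X$. Since $W^\perp$ has dimension $1$, no further drop in rank is possible, so $X \cap \Pi$ is either (i) a nondegenerate Hermitian hypersurface in $\proj^{m-1}$, when $\Pi$ is not tangent to $X$; or (ii) a cone with apex $v$ over a nondegenerate Hermitian hypersurface in a $\proj^{m-2}$, when $\Pi = T_v X$ for some $v \in X(\F_{r^2})$.

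Next I would count each type using Theorem \ref{hypcount}. Type (i) is counted directly by applying the theorem with $m$ replaced by $m-1$. For type (ii), projection from the apex yields $\#(X \cap T_v X)(\F_{r^2}) = 1 + r^2 \cdot \#B(\F_{r^2})$, where $B$ is the nondegenerate base in $\proj^{m-2}$; count $\#B$ by applying Theorem \ref{hypcount} with $m$ replaced by $m-2$. A brief manipulation of these closed forms yields
\[
\#(\text{type (i)}) - \#(\text{type (ii)}) = (-1)^m r^{m-1},
\]
so type (i) dominates when $m$ is even and type (ii) dominates when $m$ is odd. Subtracting the dominant count from $n$ (itself given by Theorem \ref{hypcount}) telescopes the resulting geometric series to $d = r^{2m-1} - r^{m-1}$ for even $m$ and $d = r^{2m-1}$ for odd $m$, as claimed.

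The main obstacle is the classification step: one must verify that every $\F_{r^2}$-rational hyperplane arises as $v^\perp$ for some $v \in V$ defined over $\F_{r^2}$ up to scalar (this uses both the nondegeneracy of $H$ and the fact that the map $v \mapsto H(\cdot, v)$ sends rational vectors to rational functionals and is inverted by an $r$-th power, which preserves $\F_{r^2}$), and that the rank of the restricted form can drop by at most one. Once these geometric facts are in hand, the remaining step is a routine, if somewhat intricate, cancellation among geometric series of the forms $(r^{2k} - 1)/(r^2 - 1)$ and $(r^{2k} \pm r^k)/(r + 1)$.
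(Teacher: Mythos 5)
Your proposal is correct and follows essentially the same route as the paper's (sketched) argument: every $\F_{r^2}$-rational hyperplane section of $X$ is either a nondegenerate Hermitian hypersurface in $\proj^{m-1}$ or a tangent-hyperplane section, which is a cone over a nondegenerate Hermitian hypersurface in $\proj^{m-2}$, and the counts from Theorem~\ref{hypcount} show the first type dominates for $m$ even and the second for $m$ odd, giving the stated $d$ (and $d>0$ gives $k=m+1$). Your difference computation $(-1)^m r^{m-1}$ and the resulting values of $d$ check out, matching the paper's remark about which hyperplane sections yield the minimum weight codewords.
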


When $m$ is even, the minimum weight codewords of the $C(X)$
come from nondegenerate Hermitian variety hyperplane sections.
On the other hand, if $m$ is odd, then the minimum weight codewords
of $C(X)$ come from hyperplane sections that are degenerate
Hermitian varieties.  In this case,
In both cases, the nonzero codewords of $C(X)$ have only two distinct
weights:
\[
r^{2m - 1} + (-1)^{m-1} r^{m - 1} \text{ and } r^{2m - 1}.
\]
The hierarchies of generalized Hamming weights $d_r$ are also
known for the $C(X)$ codes by work of Hirschfeld, Tsfasman, and Vladut,
\cite{htv94}.
The same sort of techniques used in
Theorem~\ref{hermsurfbound} above can be applied to the
$C_h(X)$ codes for $h \ge 2$ here.  However, much less is known about
the exact Hamming weights of these codes.

\subsection{Grassmannians and flag varieties}

The \emph{Grassmannian} $\G(\ell,m)$ is a projective variety
whose points are in one-to-one correspondence with
the $\ell$-dimensional vector subspaces of an
$m$-dimensional vector space (or equivalently
the ($\ell-1$)-dimensional linear subspaces of $\proj^{m-1}$).
We very briefly recall the construction.

Let $\F$ denote an algebraic closure of $\bFq$.
Given any basis $B = \{v_1,\ldots,v_\ell\}$ for an $\ell$-dimensional
vector subspace $W$ of $\F^{\,m}$, form the
$\ell\times m$ matrix  $M(B)$ with rows $v_i$.  Consider the
determinants of the maximal square ($\ell\times \ell$)
submatrices of $M(B)$.   There is one such maximal
minor for each subset $I \subset \{1,\ldots,m\}$ with $\# I = \ell$,
so writing $p_I(W)$ for the maximal minor in the columns corresponding to $I$,
the \emph{Pl\"ucker coordinate vector} of $W$ is the homogeneous
coordinate vector
\begin{equation}\label{pluck}
p(W) = ( \cdots : p_I(W) : \cdots ) \in \proj^{\binom{m}{\ell} - 1},
\end{equation}
where $I$ runs through all subsets of size $\ell$ in $\{1,\ldots,m\}$.
The point $p(W)$ is a well-defined invariant of $W$
because a change of basis in $W$ multiplies the matrix
$M(B)$ on the left by the change of basis matrix, an element of
${\rm GL}(\ell,\F)$. All
components of the Pl\"ucker coordinate vector are multiplied
by the determinant of the change of basis matrix, an element of
$\F^{\,*}$.   Hence any choice of basis in $W$
yields the same point $p(W)$ in $\proj^{\binom{m}{\ell} - 1}$.

The locus of all such points (for all $W$) forms the
Grassmannian $\G(\ell,k)$.  Consider the set of $W$ such
that $p_{I_0}(W) \ne 0$, so
the maximal minor with $I_0 = \{1,\ldots,\ell\}$ is invertible.
The set of such $W$ is one of the
open subsets in the standard affine cover of $\G(\ell,m)$.  In
the row-reduced echelon form of $M(B)$, the entries in the columns
complementary to $I_0$ (an $\ell \times (m - \ell)$ block)
are arbitrary and uniquely determine $W$.  Hence
\[
\dim \G(\ell,m) = \ell (m - \ell).
\]
To construct Grassmannian codes, one uses the $\bFq$-rational
points of $\G(\ell,m)$, which come from subspaces $W$
defined over $\bFq$.  Nogin has established the following
result.

\begin{theorem}\label{Gcodes}
Let ${\mathcal S}$ be the set
of all the $\bFq$-rational points on $X = \G(\ell,m)$.
Then the $C(X; {\mathcal S})$ code (from linear forms in the
Pl\"ucker coordinates) has parameters
\[
\left[ \left[\begin{matrix} m\\ \ell\\ \end{matrix}\right]_q,\
\binom{m}{\ell},\ q^{\ell(m - \ell)}\right],
\]
where
\[
\left[\begin{matrix} m\\ \ell\\ \end{matrix}\right]_q =
\frac{(q^m - 1)(q^m - q)\cdots (q^m - q^{\ell-1})}{(q^\ell-1)(q^\ell-q)\cdots (q^\ell - q^{\ell-1})}.
\]
\end{theorem}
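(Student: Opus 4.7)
The plan is to prove the three parameters separately.

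For the block length, I would count the $\bFq$-rational points of $\G(\ell,m)$ by the standard combinatorial argument: form ordered $\ell$-tuples of linearly independent vectors in $\bFq^m$, giving $\prod_{i=0}^{\ell-1}(q^m - q^i)$ such tuples, and then divide by the number of ordered bases of a fixed $\ell$-dimensional subspace, $\prod_{i=0}^{\ell-1}(q^\ell - q^i)$. This ratio is exactly the Gaussian binomial coefficient appearing in the statement.

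For the dimension, it suffices to show that the $\bFq$-rational points of $\G(\ell,m)$ linearly span the ambient $\proj^{\binom{m}{\ell}-1}$, forcing the evaluation map on linear forms to be injective and hence $k = \binom{m}{\ell}$. I would exhibit the $\binom{m}{\ell}$ coordinate subspaces $W_I = \mathrm{span}(e_i : i \in I)$ for size-$\ell$ subsets $I \subseteq \{1,\ldots,m\}$; their Pl\"ucker vectors are precisely the standard basis vectors $e_I$ of $\Lambda^\ell \bFq^m$, so these $\binom{m}{\ell}$ rational points span.

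For the minimum distance, the upper bound $d \le q^{\ell(m-\ell)}$ can be exhibited by the coordinate form $p_{I_0}$ with $I_0 = \{1,\ldots,\ell\}$. Its nonvanishing locus on $\G(\ell,m)$ is the standard affine chart, consisting of subspaces that project isomorphically onto the first $\ell$ coordinates; these are parameterized by $\ell \times (m-\ell)$ matrices over $\bFq$, giving exactly $q^{\ell(m-\ell)}$ $\bFq$-rational points and thus a codeword of this weight.

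The main obstacle is the matching lower bound $d \ge q^{\ell(m-\ell)}$: every nonzero $\omega \in \Lambda^\ell(\bFq^m)^*$ must be nonvanishing on at least $q^{\ell(m-\ell)}$ rational points of $\G(\ell,m)$. Decomposable forms $\omega = \alpha_1 \wedge \cdots \wedge \alpha_\ell$ cut out Schubert hyperplanes whose complement is a single affine big cell of exactly the right size, so equality holds on this $GL(m,\bFq)$-orbit. For non-decomposable $\omega$, a naive degree bound on a single affine chart does not suffice, because linear forms on Pl\"ucker coordinates pull back to polynomials of degree $\ell$ in the $\ell(m-\ell)$ chart coordinates and Schwartz--Zippel gives only $q^{\ell(m-\ell)} - \ell q^{\ell(m-\ell)-1}$ nonvanishing points. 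I would attempt to overcome this either by using the $GL(m,\bFq)$-action together with the rank stratification of alternating $\ell$-forms, reducing to the decomposable case as a degeneration, or by a direct argument that, for any $\omega$, some standard affine chart $U_J = \{p_J \ne 0\}$ has $\omega$ restricting to a unit on the whole chart plus lower-order corrections which cannot fully annihilate it on the rational points. This orbit/chart analysis is the delicate crux of the theorem.
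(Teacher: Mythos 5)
Your counts of $n$, the spanning argument for $k$, and the upper bound $d \le q^{\ell(m-\ell)}$ via the chart $\{p_{I_0} \ne 0\}$ are all fine (your direct spanning argument for injectivity is even slightly more self-contained than the paper's, which deduces $k$ from $d>0$). But the heart of the theorem is exactly the step you leave open: showing that \emph{every} nonzero linear form in the Pl\"ucker coordinates is nonzero on at least $q^{\ell(m-\ell)}$ rational points, i.e.\ that no hyperplane section of $\G(\ell,m)$ contains more $\bFq$-points than the ``decomposable'' ones. Neither of your two suggested strategies is carried out, and neither is clearly viable as stated: the number of $\bFq$-points of a hyperplane section is not semicontinuous under a degeneration to the decomposable stratum in the direction you need (the decomposable sections are the ones with the \emph{most} points, so you must show every other section has fewer, and specialization arguments do not obviously deliver that inequality over $\bFq$), and the ``unit plus lower-order corrections on some chart $U_J$'' idea is only a heuristic --- you give no mechanism forcing enough nonvanishing points on that chart.

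The paper (following Nogin) closes this gap with an exterior-algebra induction that you should compare against. Identify hyperplanes with nonzero classes $\beta \in \bigwedge^{m-\ell}V$, $V = \bFq^{\,m}$, via the perfect pairing $\bigwedge^{m-\ell}V \times \bigwedge^{\ell}V \to \bigwedge^{m}V$; then the weight of the corresponding codeword is $\#\{(w_1,\ldots,w_\ell) : \beta \wedge w_1 \wedge \cdots \wedge w_\ell \ne 0\}$ divided by $\#{\rm GL}(\ell,\bFq)$. The key elementary fact is that $\beta \wedge e = 0$ if and only if $\beta = \beta' \wedge e$, which forces $\{w \in V : \beta \wedge w = 0\}$ to be a subspace of dimension at most $m-\ell$; hence $\#\{w_1 : \beta \wedge w_1 \ne 0\} \ge q^m - q^{m-\ell}$, with equality exactly in the decomposable case. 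One then fixes such a $w_1$, applies the inductive hypothesis to $\beta \wedge w_1 \in \bigwedge^{m-\ell+1}V$ (the base case $\ell=1$ being trivial since every element of $\bigwedge^{m-1}V$ is decomposable), and sums over $w_1$ to conclude that an arbitrary $\beta$ gives weight at least that of a decomposable one, namely $q^{\ell(m-\ell)}$. Without this (or an equivalent) argument, your proposal establishes only $d \le q^{\ell(m-\ell)}$, so the minimum distance claim remains unproved.
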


\begin{proof}
The numerator in the formula for
$\left[\begin{matrix} m\\ \ell\\ \end{matrix}\right]_q$
is precisely the number of ways of picking a list of $\ell$ linearly independent
vectors in $\bFq^{\,m}$ (a basis for a $W$ defined over $\bFq$).
Similarly, the denominator is the number of ways
of picking $\ell$ linearly independent vectors in $\bFq^{\,\ell}$, hence the order of the group ${\rm GL}(\ell,\bFq)$.  The quotient is the number of distinct
$\ell$-dimensional subspaces of $\bFq^{\,m}$.
This shows $n = \#{\mathcal S} = \left[\begin{matrix} m\\ \ell\\ \end{matrix}\right]_q$.  Assuming $d = q^{\ell(m-\ell)}$ for the moment,
the fact that $d > 0$ says the evaluation mapping on the vector space of
linear forms in $\proj^{\binom{m}{\ell} - 1}$ is injective, and the formula
for $k$ follows.  Finally, we must prove that $d = q^{\ell(m - \ell)}$.

The complement of the hyperplane section
$\G(\ell,m) \cap {\bf V}(p_{I_0})$ contains exactly $q^{\ell(m - \ell)}$
$\bFq$-rational points of $\G(\ell,m)$.
Hence $d \le q^{\ell(m - \ell)}$.
The cleanest way to prove that this is an equality
is to use the language of exterior
algebra on $\bFq$-vector spaces, following Nogin in \cite{n96}.

Let $V = \bFq^{\,m}$ and write
$e_i$ for the standard basis vectors in $V$.
The $\bFq$-rational points of the Grassmannian $\G(\ell,m)$ can be
identified with the subset of
$\proj\left(\bigwedge^\ell V\right)\simeq \proj^{\binom{m}{\ell} - 1}$
corresponding to the \emph{completely decomposable} elements of the exterior
product $\bigwedge^\ell V$ (that is, nonzero elements of the form
$\omega = w_1 \wedge w_2 \wedge \cdots \wedge w_\ell$ for some $w_i \in V$
that form a basis for the subspace they span).

The hyperplanes in $\proj\left(\bigwedge^\ell V\right)$ correspond to elements of
$\proj \left(\bigwedge^\ell V\right)^*$, hence
to elements of $\bigwedge^{m - \ell} V$ (up to scalars) via the nondegenerate pairing
\[
\textstyle{\wedge : \bigwedge^{m - \ell} V \times \bigwedge^\ell V \to \bigwedge^m V \simeq\bFq.}
\]
It follows that
the hyperplanes in $\proj\left(\bigwedge^\ell V\right)$ all have the form
\[
\textstyle H(\alpha) = \proj\, \{\omega \in \bigwedge^\ell V : \alpha \wedge \omega = 0\}
\]
for some nonzero $\alpha \in \bigwedge^{m - \ell} V$.

Under these identifications, each hyperplane ${\bf V}(f)$ for $f$ a
linear form in the Pl\"ucker coordinates corresponds to $H(\alpha)$ for some $\alpha$.
For instance, ${\bf V}(p_{I_0})$ corresponds
to $H(\alpha_0)$ for the completely decomposable
element $\alpha_0 = e_{\ell + 1} \wedge \cdots \wedge e_m$.
All completely decomposable $\alpha \in \bigwedge^{m - \ell} V$
define hyperplane sections of the  Grassmannian with the same number
of $\bFq$-rational points.  Call this number $N_\ell$.

What must be proved is
that if $\beta \in \bigwedge^{m - \ell} V$ is arbitrary, then the
linear forms $f$ in the Pl\"ucker coordinates defining the hyperplane
$H(\beta)$ satisfy
\[
{\rm wt}(ev_{\mathcal S}(f)) \ge N_\ell.
\]
This follows by induction on $\ell$ using the easily checked fact that
if $e \in V$ and $\alpha \in \bigwedge^{m - \ell} V$, then
\begin{equation}\label{anne}
\alpha \wedge e = 0 \Longleftrightarrow \alpha = \alpha' \wedge e
\end{equation}
for some $\alpha' \in \bigwedge^{m - \ell - 1} V$.

If $\ell = 1$,
there is nothing to prove because every element of $\bigwedge^{m - 1} V$
is completely decomposable.  If $\ell > 1$, writing $[\ell]_q = \# {\rm GL}(\ell,\bFq)$,
\begin{eqnarray*}
{\rm wt}(ev_{\mathcal S}(f))
& = & \# \{W = {\rm Span}(w_1,\ldots,w_\ell) : \beta \wedge w_1 \wedge \cdots \wedge w_\ell \ne 0\}\\
&=& \# \{(w_1,\ldots,w_\ell) : \beta \wedge w_1 \wedge \cdots \wedge w_\ell \ne 0\}\, /\, [\ell]_q
\end{eqnarray*}
Hence by the induction hypothesis, if $\alpha$ is completely decomposable
\begin{eqnarray*}
[\ell]_q\cdot {\rm wt}(ev_{\mathcal S}(f))
& = & \sum_{w_1 : \beta \wedge w_1 \ne 0} \# \{(w_2,\ldots,w_\ell) :
(\beta \wedge w_1) \wedge w_2 \wedge \cdots \wedge w_\ell \ne 0\}\\
& \ge & \sum_{w_1 : \beta \wedge w_1 \ne 0} N_{\ell - 1}\cdot [\ell - 1]_q \\
& = & N_{\ell - 1}\cdot [\ell - 1]_q\cdot \# \{w_1 : \beta \wedge w_1 \ne 0\}\\
&\ge & N_{\ell - 1}\cdot [\ell - 1]_q \cdot \# \{w_1 : \alpha \wedge w_1 \ne 0\} \quad \text{by\ (\ref{anne})}\\
& = & [\ell]_q\cdot N_\ell.
\end{eqnarray*}
\end{proof}

The exterior algebra language can also be used to say more about
the weight distribution of $C(\G(\ell,m); {\mathcal S})$.  For instance,
the number of minimum weight words of this code is equal to the number
of linear forms corresponding to completely decomposable $\alpha$.
This number is exactly $q - 1$ times the number of $\bFq$-rational points
of the dual Grassmannian $\G(m - \ell, m)$, or
\[
(q - 1) \left[\begin{matrix} m\\ m - \ell\\ \end{matrix}\right]_q
= (q - 1) \left[\begin{matrix} m\\ \ell\\ \end{matrix}\right]_q.
\]
For further information on these codes see the bibliographic
notes in Section~\ref{bibnotes}.

Codes on certain subvarieties
of Grassmannians, the so-called \emph{Schubert varieties},
have also been studied in detail by Chen, Guerra and Vincenti,
and Ghorpade and Tsfasman.
Let $\alpha = (\alpha_1,\ldots,\alpha_\ell)\in \mathbb{Z}^\ell$,
where $1 \le \alpha_1 \le \cdots \le \alpha_\ell \le m$. If
$B = \{v_1,\ldots,v_m\}$ is a fixed basis of $\bFq^{\,m}$, let
$A_i$ be the span of the first $i$ vectors in $B$. Then the
Schubert variety $\Omega_\alpha$ is defined as
\begin{equation}\label{schub}
\Omega_\alpha = \{p(W) \in \G(\ell,m) : \dim W \cap A_{\alpha_i} \ge i \}.
\end{equation}
See Section~\ref{bibnotes} for some pointers to
the literature here.

Just as Grassmannians parametrize linear subspaces
in $\F^{\,m}$, the \emph{flag varieties} parametrize flags of
linear subspaces, that is nested sequences of subspaces
\[
V_1 \subset V_2 \subset \cdots \subset V_s,
\]
where $\dim V_{i} = \ell_i$ and $0 < \ell_1 < \ell_2 < \ldots < \ell_s < m$.
The flag is said to have \emph{type} $(\ell_1,\ell_2,\ldots,\ell_s)$.
Also set $\ell_{s+1} = m$ and $\ell_0 = 0$ by convention.
The group $G = {\rm GL}(m,\F)$ acts on the set of flags of
each fixed type and
the isotropy subgroup of a particular flag is a parabolic subgroup $P$
conjugate to the group of block upper-triangular matrices
with diagonal blocks $M_r$ of sizes $\ell_r - \ell_{r - 1}$ for $1 \le r \le s+1$.
Hence the quotient $G/P$, which is denoted
${\mathcal F}(\ell_1,\ell_2,\ldots,\ell_s;m),$
classifies flags of type $(\ell_1,\ell_2,\ldots,\ell_s)$.
The set $G/P$ has the structure of a projective variety, which can be
described as follows.  Each $V_i$ corresponds to a point of $\G(\ell_i,m)$.
So the flag corresponds to a point of the product variety
$\G(\ell_1,m)\times \cdots \times \G(\ell_s,m)$ and
${\mathcal F}(\ell_1,\ell_2,\ldots,\ell_s;m)$ is the subset of this product
defined by the conditions $V_i \subset V_{i+1}$ for all $i$.   This can be
embedded in $\proj^{N_1} \times \cdots \times \proj^{N_s}$, for $N_i = \binom{m}{\ell_i}$,
by the Pl\"ucker coordinates as in (\ref{pluck}).  Finally, the product
\[
\proj^{N_1} \times \cdots \times \proj^{N_s} \hookrightarrow \proj^N
\]
for $N = (N_1 + 1)\cdot \cdots \cdot (N_s + 1) - 1$ by another
standard construction called the \emph{Segre map}.

As in the Grassmannian case, $\bFq$-rational points on 
${\mathcal F}(\ell_1,\ell_2,\ldots,\ell_s;m)$ correspond to
flags that are defined over $\bFq$.  As an example of codes
from flag varieties, consider the code $C(X;{\mathcal S})$ from
$X = {\mathcal F}(1,m-1;m)$ (that is, the
variety parametrizing flags $V_1 \subset V_2$ consisting of a line $V_1$
and a hyperplane $V_2$ containing that line).  In this case
\[
{\mathcal F}(1,m-1;m) \subset \G(1,m) \times \G(m-1,m) \simeq \proj^{m-1}\times\proj^{m-1}
\hookrightarrow \proj^{m^2-1}.
\]

\begin{theorem}\label{fvcodes}
Let ${\mathcal S}$ be the set
of all the $\bFq$-rational points on $X = {\mathcal F}(1,m-1;m)$.
Then the $C(X; {\mathcal S})$ code has parameters
\[
\left[\frac{(q^m - 1)(q^{m-1} - 1)}{(q - 1)^2}, m^2 - 1, q^{2m - 3} - q^{m-2}\right].
\]
\end{theorem}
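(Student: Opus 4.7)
The plan is to parametrize flags by pairs $(x,y)\in \bFq^m\times (\bFq^m)^*$, with $V_1=\mathrm{span}(x)$ and $V_2=\ker(y)$, and then translate both the dimension and the minimum distance into linear algebra on $m\times m$ matrices. The block length is routine: a flag of type $(1,m-1)$ is a line $V_1\subset\bFq^m$ followed by a hyperplane through $V_1$, giving $n=\tfrac{q^m-1}{q-1}\cdot\tfrac{q^{m-1}-1}{q-1}$ via the parametrization of hyperplanes containing $V_1$ by lines in the quotient $\bFq^m/V_1$.

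For the dimension, observe that the Segre embedding identifies a linear form on $\proj^{m^2-1}$ with a bilinear form $f_A(x,y)=x^{T}Ay$ for $A\in\mathrm{Mat}_m(\bFq)$. The incidence condition $V_1\subset V_2$ is precisely $\sum x_iy_i=0$, i.e.\ $f_I=0$, so $X\subset\proj^{m-1}\times\proj^{m-1}$ is cut out by the single bidegree-$(1,1)$ equation $f_I$; a short cohomology computation using $\mathcal{I}_X\simeq\mathcal{O}(-1,-1)$ identifies $\Gamma(X,\mathcal{O}_X(1,1))$ with $\mathrm{Mat}_m(\bFq)/\bFq\cdot I$, of dimension $m^2-1$. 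Once the minimum distance is shown positive this yields $k=m^2-1$.

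For the minimum distance, $ev_{\mathcal{S}}(f_A)$ vanishes on the flag $(V_1,V_2)$ iff $Ax\in V_2$. For a fixed $V_1=\mathrm{span}(x)$, if $V_1$ is $A$-invariant (i.e.\ $Ax\in V_1$) every one of the $(q^{m-1}-1)/(q-1)$ hyperplanes through $V_1$ qualifies; otherwise $V_2$ must contain the $2$-plane $\mathrm{span}(x,Ax)$, and there are $(q^{m-2}-1)/(q-1)$ such hyperplanes. Summing over all $(q^m-1)/(q-1)$ lines and simplifying gives
\[
\mathrm{wt}(ev_{\mathcal{S}}(f_A)) = q^{m-2}\left(\frac{q^m-1}{q-1}-E(A)\right),
\]
where $E(A)$ is the number of $\bFq$-rational $A$-invariant lines in $\proj^{m-1}$. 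Since $E(A+\lambda I)=E(A)$, this quantity is well defined on the class of $A$ modulo $\bFq\cdot I$, so the minimum distance is $q^{m-2}\bigl(\tfrac{q^m-1}{q-1}-\max_A E(A)\bigr)$, with the maximum taken over non-scalar matrices $A$.

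The main obstacle is to prove $E(A)\le 1+(q^{m-1}-1)/(q-1)$ for every non-scalar $A$. An $A$-invariant line lies in some $\bFq$-eigenspace, so if $d_1,\dots,d_k$ are the dimensions of the $\bFq$-eigenspaces of $A$ then $E(A)=\sum_i (q^{d_i}-1)/(q-1)$ with $\sum d_i\le m$ and each $d_i\ge 1$. If $k=1$, then $A$ non-scalar forces $d_1\le m-1$, giving $E(A)\le(q^{m-1}-1)/(q-1)$. If $k\ge 2$, then each $d_i\le m-1$; because $d\mapsto(q^d-1)/(q-1)$ is strictly convex in $d$, the sum is maximized by concentrating the mass in a single part, yielding the partition $(m-1,1)$ and $E(A)=1+(q^{m-1}-1)/(q-1)$. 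This bound is attained by $A=\mathrm{diag}(\lambda,\mu,\dots,\mu)$ with $\lambda\ne\mu$, and substituting into the weight formula gives $d=q^{m-2}(q^{m-1}-1)=q^{2m-3}-q^{m-2}$. Since $d>0$, the dimension $k=m^2-1$ follows, completing the proof.
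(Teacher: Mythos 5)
Your proposal is correct, and it is worth noting that the paper itself gives no proof of this theorem: it simply attributes the result to Rodier \cite{r03} and records only the remark that the evaluation map on linear forms of $\proj^{m^2-1}$ fails to be injective because the incidence condition $V_1 \subset V_2$ is a linear relation in the Segre coordinates. Your argument supplies a complete, self-contained proof along the natural lines of Rodier's analysis: you identify bidegree-$(1,1)$ forms with matrices $A$, observe that the kernel of evaluation is exactly $\bFq\cdot I$ (which makes the paper's non-injectivity remark precise), and convert the weight of $ev_{\mathcal S}(f_A)$ into the count $E(A)$ of $A$-invariant $\bFq$-rational lines via the clean dichotomy $Ax\in V_1$ versus $V_2\supseteq{\rm span}(x,Ax)$; the resulting formula ${\rm wt}=q^{m-2}\bigl(\tfrac{q^m-1}{q-1}-E(A)\bigr)$ is right, as is the bound $E(A)\le 1+\tfrac{q^{m-1}-1}{q-1}$ for non-scalar $A$, which follows from your eigenspace-dimension partition together with the merging inequality $g(a)+g(b)\le g(a+b-1)+g(1)$ for $g(d)=\tfrac{q^d-1}{q-1}$ (your convexity step), and is attained by ${\rm diag}(\lambda,\mu,\ldots,\mu)$. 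Two minor points you should make explicit but which do not affect correctness: invariant lines must include those in $\ker A$ (eigenvalue $0$), which your eigenspace count already covers, and depending on whether the hyperplane is recorded by a covector acting on the left or right the vanishing condition involves $A$ or $A^{T}$ — immaterial, since transposition is a bijection on non-scalar matrices preserving $E$. With those caveats noted, your write-up could stand in for the citation to \cite{r03}.
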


The proof is due to Rodier and appears in \cite{r03}.  The evaluation mapping using
linear forms on $\proj^{m^2-1}$ is \emph{not} injective in this case because
the condition that $V_1 \subset V_2$ is expressed by a linear equation
in the coordinates of the Segre embedding of $\proj^{m-1}\times \proj^{m-1}$.

\subsection{Blow-ups and Del Pezzo surfaces}

Consider the surface $X = \proj^2$.  Let
\begin{equation}\label{blseq}
Y_k \to Y_{k-1} \to \cdots \to Y_1 \to Y_0 = X,
\end{equation}
be a sequence of morphisms
where for all $j$, $\pi_j : Y_j \to Y_{j-1}$ is the blow up of an $\bFq$-rational
point of the surface $Y_{j-1}$.  The result will be a surface $Y = Y_k$ containing
divisors $E_1,\ldots,E_k$ that are all contracted to a point on $X$.
Each $E_j$ is isomorphic to $\proj^1$, and
each contributes $q$ additional $\bFq$-rational points.  Therefore
\[
\# Y(\bFq) = q^2 + q + 1 + kq,
\]
which also attains the upper Weil bound for a surface with
the Betti numbers of these examples.  Whether this
construction gives interesting codes depends very much on
the the embedding of the surface $Y$
into $\proj^m$ (that is, on the linear series of divisors
forming the hyperplane sections).

One famous family of examples of such surfaces are the so-called
\emph{Del Pezzo} surfaces.
Hartshorne's text \cite{har77} and Manin \cite{m86} are good general
references for these.
By definition, a Del Pezzo surface
is a surface of degree $m$ in $\proj^m$
on which the anticanonical line bundle ${\mathcal K}^{-1}$ is
ample.  A classical result in the theory of algebraic surfaces
is that every Del Pezzo surface over an algebraically closed
field $\F$ is obtained either as the
degree 2 Veronese image of a quadric in $\proj^3$, or as follows.
Let $\ell$ be one of the integers $0,1,\ldots,6$, and take
points $p_1,\ldots,p_\ell$ in $\proj^2$ in general position (no
three collinear, and no six contained in a conic curve).  The
linear system of \emph{cubic} curves in $\proj^2$ containing
the base points $\{p_1,\ldots,p_\ell\}$
gives a rational map $\rho : \proj^2 \smallda \proj^{9 - \ell}$.
The image is a surface $X_\ell$ of degree $9 - \ell$ on which the points $p_i$
blow up to exceptional divisors $E_i \simeq \proj^1$
as in the composition of all the maps in (\ref{blseq}).  Since the
canonical sheaf on $\proj^2$ is ${\mathcal K} \simeq {\mathcal O}_{\proj^2}(-3)$,
the anticanonical
divisors are precisely the divisors in the linear system of cubics
containing $\{p_1,\ldots,p_\ell\}$.
For instance, with $\ell = 6$, $X_\ell$ is a cubic surface in $\proj^3$, and
every smooth cubic surface is obtained by blowing up some choice of points
$p_1,\ldots,p_6$.  With $\ell = 0$, the surface $X_0$ is the degree 3 Veronese
image of $\proj^2$, a surface of degree $9$ in $\proj^9$.

To get a Del Pezzo surface defined over $\bFq$, the
points $p_i$ should be $\bFq$-rational points in $\proj^2$.  This
means that the construction above can fail for certain small fields
(there may not be enough points $p_i$ in general position).  It suffices
to take $q > 4$, however in order to construct the Del Pezzo surfaces with
$0\le \ell \le 6$.

By considering the possible hyperplane sections of the Del Pezzo surface
Boguslavsky derives the following result in \cite{b98}.

\begin{theorem}\label{DPsurf}
Let $X_\ell$ be the Del Pezzo surface constructed as above and let $q > 4$. The
parameters of the $C(X_\ell)$ code are
\[
n = q^2 + q + 1 + \ell q, \quad k = 10 - \ell,
\]
and $d$ given in the following table
\[
\begin{array}{|c|c|c|c|c|c|c|c|}
\hline
\ell & 0 & 1 & 2 & 3 &\ 4\ & 5 & 6 \\ \hline
d(C(X_\ell) & q^2 - 2q & q^2 - 2q & q^2 - 2q &
q^2 - 2q + 1 &\ q^2\ & q^2 + 2q & q^2 + 4q + 1 {}^*\\ \hline
\end{array}
\]
\end{theorem}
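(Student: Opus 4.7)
The plan is to compute $n$, $k$, and $d$ in turn, with the real work lying entirely in the last step.

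First, for the block length, I would count $\bFq$-rational points of $X_\ell$ directly from the construction. The map $\pi : X_\ell \to \proj^2$ is an isomorphism away from the $\ell$ exceptional divisors, so
\[
\# X_\ell(\bFq) = \bigl(\# \proj^2(\bFq) - \ell\bigr) + \ell\cdot\# \proj^1(\bFq) = (q^2+q+1-\ell) + \ell(q+1) = q^2+q+1+\ell q,
\]
using that each $E_i$ is $\bFq$-rational (since each $p_i$ is). This gives $n$.

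For the dimension $k$, I would observe that since $X_\ell \subset \proj^{9-\ell}$ is embedded by the anticanonical system $|{-}K_{X_\ell}|$, the ambient hyperplane sections correspond to the projective space of cubics in $\proj^2$ vanishing at $p_1,\ldots,p_\ell$. The general-position hypothesis on the $p_i$ (for $\ell \le 6$) is exactly what guarantees that these $\ell$ point conditions are linearly independent on the $10$-dimensional space of cubic forms, so $h^0(X_\ell,\mathcal{O}(1)) = 10 - \ell$. Once I have shown $d > 0$ below, the evaluation map is forced to be injective, giving $k = 10 - \ell$.

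The heart of the proof is the minimum distance. By (\ref{mindist}), $d = n - M$ where $M$ is the maximum number of $\bFq$-rational points on a hyperplane section of $X_\ell$. Every such section is the strict transform $\tilde{C}$ of a cubic curve $C \subset \proj^2$ passing through $p_1,\ldots,p_\ell$ (with multiplicities absorbed into the exceptional part), and the contribution of $\tilde{C}$ from above each $p_i$ records the $\bFq$-rational tangent branches of $C$ at $p_i$. Thus counting $\bFq$-rational points of $\tilde{C}$ reduces to a classical problem on plane cubics: I would run through the possible decomposition types of $C$ — irreducible (smooth, nodal, or cuspidal), line plus irreducible conic, and the various configurations of three lines (three concurrent, a triangle, a line with multiplicity, etc.) — and for each type compute the point count on $\tilde{C}$ by combining the plane count of $C(\bFq)$ with the local contributions at the base points. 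Maximizing over types that can actually contain the $\ell$ given base points in general position gives $M$, and subtracting from $n$ yields the tabulated $d$.

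The main obstacle is this last step: the classification is sensitive to $\ell$ because the general-position constraint increasingly restricts which reducible cubics are admissible. For small $\ell$ the extremal section is typically a reducible cubic (three concurrent lines for $\ell=0,1,2$; a triangle of lines for the middle values), while for $\ell = 6$ the extremal section is a ``tritangent trio'' of lines on the cubic surface, each $\proj^1$ contributing $q+1$ points with $3$ pairwise intersections, giving $M = 3(q+1) - 3 = 3q$ and $d = q^2+4q+1$; the asterisk in the table presumably reflects that ensuring such a triangle is defined over $\bFq$ requires $q$ sufficiently large. The case analysis needs the hypothesis $q > 4$ to guarantee enough rational points in general position for the construction to produce each $X_\ell$ and for the maximizing cubic to actually occur over $\bFq$, and to rule out accidental higher-point counts (e.g.\ Hasse--Weil bounds on smooth cubics beaten by the reducible candidates). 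Once the extremal configuration is identified in each case, $d = n - M$ yields the entries of the table.
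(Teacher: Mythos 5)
Your computations of $n$ and $k$ are fine and agree with the paper's surrounding discussion (note that the paper itself does not prove this theorem; it cites Boguslavsky \cite{b98} and only works out the $\ell = 6$ entry). The genuine gap is in your reduction of the minimum-distance problem. Under the anticanonical embedding, the hyperplane sections of $X_\ell$ are the members of $|-K_{X_\ell}| = |3H - E_1 - \cdots - E_\ell|$; if such a section comes from a plane cubic $C$ with multiplicity $m_i$ at $p_i$, the section is $\tilde{C} + \sum_i (m_i - 1)E_i$, not the strict transform $\tilde{C}$ alone. Your recipe --- count points of $\tilde{C}$, with the contribution over $p_i$ given by the rational branches of $C$ at $p_i$ --- therefore misses the $q+1$ rational points of every exceptional curve $E_i$ that occurs as a component (i.e.\ whenever $m_i \ge 2$), and these components are precisely what produce the minimum-weight codewords for $1 \le \ell \le 5$. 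For example, for $\ell = 1$ the extremal section comes from three lines concurrent at $p_1$: the section is $\tilde{L}_1 + \tilde{L}_2 + \tilde{L}_3 + 2E_1$, whose support has $(q+1) + 3q = 4q+1$ rational points, giving $d = n - (4q+1) = q^2 - 2q$ as in the table; counting only the strict transforms (three pairwise disjoint copies of $\proj^1$) gives $3q+3$ and the wrong value $q^2 - q - 2$. Likewise for $\ell = 3$ the extremal section is the triangle $\tilde{L}_{12}+\tilde{L}_{13}+\tilde{L}_{23}$ \emph{together with} $E_1+E_2+E_3$, with $6q$ points and $d = q^2 - 2q + 1$, whereas the bare triangle of strict transforms has only $3q+3$ points. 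So your guessed extremal configurations (``three concurrent lines for $\ell = 0,1,2$, a triangle for the middle values'') cannot reproduce the table: the maximization has to be carried out over effective divisors in $|3H - \sum_i E_i|$, i.e.\ over plane cubics with assigned multiplicities at the $p_i$ and with the exceptional components included in the point count, not over reduced plane cubics via their strict transforms.

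A second, smaller error: your reading of the asterisk at $\ell = 6$ is not what is intended. It does not flag a largeness condition on $q$ needed to have a rational triangle of lines; it flags that $q^2 + 4q + 1$ is the value for cubic surfaces without Eckardt points. For special configurations of the six points the surface has an Eckardt point, i.e.\ a plane section consisting of three concurrent lines with $3q+1$ rational points, and for those surfaces $d = q^2 + 4q$.
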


The case $\ell = 6$ corresponds to the code from
a cubic surface in $\proj^3$.  Note the asterisk in the
table above.  In the generic case,
there are plane sections of a cubic surface consisting
of three lines forming a triangle, but
no sections consisting of three concurrent lines.
The triangle plane sections contain the maximum number of
 $\bFq$-rational points, namely $3q$.   Hence
$d(C(X_6)) = q^2 + 7q + 1 - 3q = q^2 + 4q + 1$, as claimed
in this case.  For some special configurations of points $p_i$,
however, the corresponding cubic surface will have \emph{Eckardt
points} where there is a plane section consisting of three
concurrent lines.  For those surfaces, the minimum distance
is $q^2 + 4q$ rather than $q^2 + 4q + 1$.

\subsection{Ruled surfaces and generalizations}

A \emph{ruled surface} is a surface $X$ with a mapping
$\pi : X \to C$ to a smooth curve $C$, whose fibers over all
points of $C$ are $\proj^1$'s.  Moreover, it is usually
required that $\pi$ has a section, that is, a mapping
$\sigma : C \to X$ such that $\pi \circ \sigma$ is the
identity on $C$.  For instance, over an algebraically
closed field, quadric surfaces in $\proj^3$ are isomorphic
to the product ruled surface $\proj^1 \times \proj^1$.
For background on these varieties, Chapter V of \cite{har77}
is a good reference.

Starting from a curve $C$ and a vector bundle of rank 2
(that is, a locally free sheaf of rank 2)
${\mathcal E}$ on $C$, the projective space bundle
$X = \proj({\mathcal E})$ is a ruled surface.
Conversely, every ruled surface $\pi : X \to C$
is isomorphic to $\proj({\mathcal E})$ for
some locally free sheaf of rank 2 on $C$.
Given a curve $C$ and two vector bundles on $C$,
the ruled surfaces $\proj({\mathcal E})$ and $\proj({\mathcal E}')$
are isomorphic if and only if ${\mathcal E} \simeq {\mathcal E}' \otimes {\mathcal L}$
for some line bundle ${\mathcal L}$ on $C$.  By choosing ${\mathcal L}$
appropriately, it is possible to make $H^0({\mathcal E}) \ne 0$ 
but $H^0({\mathcal E}\otimes {\mathcal M}) = 0$ whenever
${\mathcal M}$ is a line bundle on $C$ of negative degree and in this
case we say ${\mathcal E}$ is \emph{normalized}.  Then there
is a section $C_0$ of $X$ with $C_0^2 = -e$ where $e = \deg(E)$
is the degree of the divisor $E$ on $C$ corresponding to the line bundle
$\bigwedge^2 {\mathcal E}$.   If ${\mathcal E}$ is decomposable (a direct
sum of two line bundles) and normalized, then $e \ge 0$.
If ${\mathcal E}$ is indecomposable, then it is known that
$-g(C) \le e \le 2g(C) - 2$, where $g(C)$ is the genus.

Up to numerical equivalence, each divisor $D$ on $X$
is $D \sim b_1 C_0 + b_2 f$, where $f$ is a fiber of the mapping $\pi$ and
$b_1,b_2 \in \Z$.  The intersection product on divisors is
determined by the relations $C_0^2 = -e$, $C_0\cdot f = 1$,
$f^2 = 0$.  S.H.~Hansen has shown the following result.

\begin{theorem}\label{ruledsurf} Let $\pi : X \to C$ be a normalized
ruled surface with invariant $e \ge 0$.  Let $\# C(\bFq) = a$,  and
let ${\mathcal S}$ be the full set of $\bFq$-rational points on $X$.
Let ${\mathcal L} = {\mathcal O}_X(b_1 C_0 + b_2 f)$.  Then the
$C(X,{\mathcal L}; {\mathcal S})$ code has parameters
\[
[a(q + 1), \dim \Gamma(X,{\mathcal L}), d \ge n - b_2(q + 1) - (a - b_2)b_1],
\]
(provided that $b_2 < a$ and the bound on $d$ is positive).
\end{theorem}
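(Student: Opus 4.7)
The plan is to recognize this as a direct application of Theorem~\ref{Hansencurves} with the fibers of the ruling $\pi$ playing the role of the covering family $\{C_i\}$. First I would handle the block length: because $\pi : X \to C$ is a $\mathbb{P}^1$-bundle, any $\mathbb{F}_q$-rational point of $X$ maps to an $\mathbb{F}_q$-rational point of $C$, so $\mathcal{S}$ is partitioned into the $a$ fibers $\pi^{-1}(P_i)$ over the points $P_i \in C(\mathbb{F}_q)$, each of which is a $\mathbb{P}^1$ contributing $q+1$ rational points. This gives $n = a(q+1)$.

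Next I would set up the hypotheses of Theorem~\ref{Hansencurves}. Taking the fibers $C_i = \pi^{-1}(P_i)$ as irreducible $\mathbb{F}_q$-curves covering $\mathcal{S}$, one has $N = q+1$ and, from the intersection relations $C_0 \cdot f = 1$ and $f^2 = 0$, the uniform value $\mathcal{L} \cdot C_i = (b_1 C_0 + b_2 f) \cdot f = b_1$, so $\eta = b_1$.

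The heart of the argument is the bound $\ell \le b_2$, where $\ell$ counts the maximum number of fibers that the divisor of zeros $Z(s)$ of a nonzero $s \in \Gamma(X,\mathcal{L})$ can contain. Using the short exact sequences $0 \to \mathcal{L}(-f_{i_j}) \to \mathcal{L} \to \mathcal{L}|_{f_{i_j}} \to 0$ (and the identification $f_{i_j} = \pi^*P_{i_j}$), sections vanishing on $\ell'$ prescribed fibers lie in $\Gamma(X,\mathcal{O}_X(b_1 C_0 + (b_2 - \ell')f))$. I would then show that for a normalized ruled surface with $e \ge 0$, a class $b_1 C_0 + c f$ with $b_1 \ge 0$ and $c < 0$ has no effective representative: an effective representative $D$ would satisfy $D \cdot C_0 = -b_1 e + c < 0$, forcing $C_0$ to appear as a component; peeling off $C_0$'s is possible at most $b_1$ times before the remaining class is $c f$ with $c < 0$, which is evidently not effective. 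Hence $\ell' \le b_2$, and this bound is the main obstacle because it is the only place the structure of the ruled surface (through $e \ge 0$) really enters.

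Finally, substituting $\ell = b_2$, $N = q+1$, $\eta = b_1$ into Theorem~\ref{Hansencurves} yields
\[
d \ge a(q+1) - b_2(q+1) - (a - b_2)b_1,
\]
which is exactly the claimed bound; the hypothesis $b_2 < a$ is what guarantees $a - \ell > 0$ so that Theorem~\ref{Hansencurves} applies with the stated $\ell$. When this lower bound on $d$ is positive the evaluation map is injective, giving $k = \dim \Gamma(X,\mathcal{L})$.
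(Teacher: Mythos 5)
Your proposal is correct and follows essentially the same route as the paper: both apply Theorem~\ref{Hansencurves} to the covering family of fibers, with $n = a(q+1)$, $N = q+1$, $\eta = {\mathcal L}\cdot f = b_1$, and the key bound $\ell \le b_2$ coming from intersection with $C_0$ together with $e \ge 0$. The only (cosmetic) difference is that the paper gets $\ell \le b_2$ in one line from $Z(s)\cdot C_0 = -eb_1 + b_2 \le b_2$, whereas you twist down by the chosen fibers and rule out effective divisors in the numerical class $b_1 C_0 + cf$ with $c < 0$ by peeling off copies of $C_0$ --- a slightly more detailed justification of the same intersection-theoretic fact.
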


\begin{proof}
Let $f_1,\ldots,f_a$ be the fibers of $\pi$ over the
$\bFq$-rational points of $C$.  These are disjoint curves on $X$
isomorphic to $\proj^1$, hence contain $q+1$ $\bFq$-rational
points each.  Every $\bFq$-rational point of $X$
lies on one of these lines, so $n = a(q + 1)$.  As usual, the statement for
$k$ follows if $d > 0$.  The estimate for $d$ comes from
the method of Theorem~\ref{Hansencurves} applied to the covering family
of curves $f_1,\ldots,f_a$.  In the notation of that theorem,
we have $N = q + 1$ and $\eta = (b_1 C_0 + b_2 f)\cdot f = b_1$.
At most $\ell = b_2$ of the fibers
are contained in any divisor $D$ corresponding to a global
section of ${\mathcal O}_X(b_1 C_0 + b_2 f)$ since
$D\cdot C_0 = (b_1 C_0 + b_2 f)\cdot C_0 = -e b_1 + b_2 \le b_2$.
The bound on $d$ follows immediately.
\end{proof}

The dimension of the space of global sections of ${\mathcal L}$ can be
computed via divisors on $C$ because of general facts about
sheaves on the projective space bundle $\proj({\mathcal E})$
(see \cite{har77}, Lemma V.2.4).  See the bibliographic
notes in Section~\ref{bibnotes} for more information
about these codes and for work on codes from projective
bundles of higher fiber dimension.

\section{Codes from Deligne-Lusztig Varieties}\label{DLvars}

Some of the most interesting varieties that have been used to produce
codes by the constructions of Section~\ref{constr} are
the so-called \emph{Deligne-Lusztig varieties} from representation theory.
As we will see, their description involves several of the
general processes on varieties involved in the examples above.

Let $G$ be a connected reductive
affine algebraic group over the algebraically
closure $\F$ of $\bFq$, a closed subgroup of ${\rm GL}(n,\F)$
for some $n$.  We have the $q$-Frobenius
endomorphism $F : G \to G$ whose fixed points are the $\bFq$-rational
points of $G$.

A \emph{Borel subgroup}
of $G$ is a maximal connected solvable subgroup of $G$.
A \emph{torus} is a subgroup of $G$ isomorphic to
$(\F^{\, *})^s$ for some $s$.  All Borel subgroups are conjugate,
and each maximal torus $T$ is contained in some Borel subgroup.
Let $N(T)$ be the normalizer of $T$ in $G$.
The quotient $N(T)/T$ is a finite group called the
\emph{Weyl group} of $G$.

The set ${\mathcal B}$ of all Borel subgroups of $G$
can be identified with the quotient $G/B$ for any particular $B$
via the mapping $G/B \to {\mathcal B}$ given by $g \mapsto g^{-1}Bg$.
If $w \in W$, then the Deligne-Lusztig variety associated to $w$
can be described as follows.  Let $B$ be an $F$-stable Borel
subgroup, then 
\[
X(w) = \{ x\in G : x^{-1}F(x) \in BwB\}/B \subset {\mathcal B}.
\]

\begin{theorem}\label{DLprops}
Let $w = s_1 \cdots s_n$ be a minimal factorization
of $w$ into simple reflections in $W$, the Weyl group of $G$ as above. Then
\begin{enumerate}
\item $X(w)$ is a locally closed smooth variety of pure dimension $n$.
\item The variety $X(w)$ is fixed by the action of the group $G^F$ and is defined
over $\F_{q^\delta}$, where $\delta$ is the smallest
integer such that $F^\delta$ fixes $w$.
\item The closure of $X(w)$ in ${\mathcal B}$ is the union of the
$X(s_{i_1} \cdots s_{i_r})$ such that $1 \le i_1 < i_2 < \cdots < i_r \le n$
and $X(e)$.
\end{enumerate}
\end{theorem}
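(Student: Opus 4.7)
The plan is to realize $X(w)$ as the image (under the natural quotient) of a preimage under the Lang map, which converts all three claims into statements about Bruhat cells. Set $L: G \to G$ to be the Lang map $L(x) = x^{-1}F(x)$; by the Lang-Steinberg theorem $L$ is surjective and étale. Let $\pi: G \to G/B$ denote the natural projection. The key identity is $\pi^{-1}(X(w)) = L^{-1}(BwB)$, and this preimage is stable under the free right action of $B$ on $G$.

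For (1), I use the Bruhat decomposition: $BwB$ is a locally closed, smooth, irreducible subvariety of $G$ of dimension $\dim B + \ell(w) = \dim B + n$, since $w = s_1 \cdots s_n$ is a minimal, hence reduced, factorization. Because $L$ is étale, $L^{-1}(BwB)$ inherits local closedness, smoothness, and pure dimension $\dim B + n$. Taking the quotient by the free right $B$-action (of dimension $\dim B$) yields $X(w)$ as a locally closed smooth subvariety of ${\mathcal B}$ of pure dimension $n$.

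For (2), both assertions reduce to direct computations. If $g \in G^F$, then for any $x$ with $x^{-1}F(x) \in BwB$ one has $L(gx) = x^{-1} g^{-1} F(g) F(x) = x^{-1} F(x)$ because $F(g) = g$, so left multiplication by $g$ preserves $X(w)$. Since $B$ is $F$-stable, $F(BwB) = B F(w) B$, so $F$ sends $X(w)$ to $X(F(w))$; iterating, $F^\delta$ preserves $X(w)$ precisely when $F^\delta(w) = w$, endowing $X(w)$ with an $\F_{q^\delta}$-structure by standard descent.

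For (3), I would invoke the Chevalley closure formula $\overline{BwB} = \bigsqcup_{w' \le w} Bw' B$ in the Bruhat order, combined with the subword characterization: $w' \le w$ if and only if some reduced expression for $w'$ appears as a subword of $s_1 \cdots s_n$. Pulling back along $L$ and descending through $\pi$, the closure $\overline{X(w)}$ becomes the union of $X(w')$ for $w' \le w$, with $X(e)$ arising from the empty subword. The main subtlety, and the step I expect to require the most care, is the interaction between non-reduced subwords and the Bruhat order: different subwords $s_{i_1} \cdots s_{i_r}$ may reduce to the same Weyl group element, so the indexing union in the statement is not disjoint, and one must argue that the closure is covered by precisely the claimed strata and no others.
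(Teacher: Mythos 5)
The paper offers no proof of this theorem: it is quoted from the Deligne--Lusztig theory and the reader is referred to Carter \cite{ca85}, so there is no in-paper argument to compare against. Your proposal is essentially the standard proof from that literature, and it is correct in outline: the identity $\pi^{-1}(X(w)) = L^{-1}(BwB)$ (which uses that $B$ is $F$-stable, so the set is right $B$-stable), the fact that $L$ is \'etale because $dF = 0$ together with Lang--Steinberg surjectivity, and the Bruhat cell $BwB$ being smooth, irreducible and of dimension $\dim B + \ell(w) = \dim B + n$ give (1) after dividing by the free $B$-action; the computation $L(gx) = x^{-1}g^{-1}F(g)F(x) = L(x)$ for $g \in G^F$ and the observation $F(X(w)) \subseteq X(F(w))$, plus standard Frobenius descent, give (2). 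Two points deserve to be made explicit in (3). First, pulling the Chevalley closure formula $\overline{BwB} = \bigcup_{w' \le w} Bw'B$ back through $L$ and pushing it down through $\pi$ requires that preimage commutes with closure; this holds because both $L$ and $\pi$ are open maps ($L$ is \'etale, $\pi$ is a locally trivial $B$-bundle), so $\overline{L^{-1}(S)} = L^{-1}(\overline{S})$ and likewise for $\pi$. Second, the subtlety you flag is real but resolves by a standard Coxeter-group lemma: the product of \emph{any} subword $s_{i_1}\cdots s_{i_r}$ of a reduced expression of $w$ (reduced or not) is $\le w$ in Bruhat order, and conversely every $w' \le w$ has a reduced expression occurring as such a subword; hence the union over index sequences in the statement, together with $X(e)$ for the empty subword, is exactly $\bigcup_{w' \le w} X(w')$, with repetitions but no extraneous strata. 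With those two lemmas cited or proved, your argument is complete.
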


We refer to \cite{ca85} for the classification of reductive
$G$ in terms of Dynkin diagrams with action of $F$.
In \cite{jh92}, J.~Hansen studied the Hermitian curves over
$\F_{q^2}$, the Suzuki curves over $\F_{2^{2n+1}}$ and the
Ree curves over $\F_{3^{2n+1}}$, all well-known maximal
curves, and all used to construct
interesting Goppa codes with very large automorphism
groups.  Hansen showed that the underlying reason these particular
curves are so rich in good properties is that they
are the Deligne-Lusztig varieties for groups $G$ for which there
is just one orbit of simple reflections in the Weyl
group under the action of $F$.  The Hermitian
curves come from groups of type ${}^2\!A_2$, the Suzuki curves
comes from the groups of type ${}^2B_2$, and the Ree curves from
the groups of type ${}^2G_2$.

It is known that there are seven cases in which there are two $F$-orbits
in the set of reflections in $W$, so taking $s_1,s_2$ from the 
distinct orbits, the Deligne-Lusztig construction with $w = s_1s_2$
leads to algebraic surfaces:
\[
A_2, C_2, G_2, {}^2\!A_3, {}^2\!A_4, {}^3\!D_4, {}^2\!F_4.
\]
One of these cases is relatively uninteresting.
In \cite{r00}, Rodier shows that the complete, smooth
Deligne-Lusztig variety $\overline{X}(s_1,s_2)$
from the group of type $A_2$ is isomorphic to the blow-up of
$\proj^2$ at all of its $\bFq$-rational points.

For the group of type ${}^2\!A_3$, however, Rodier shows
that $\overline{X}(s_1,s_2)$ is isomorphic to the blow-up
of the Hermitian surface in $\proj^3$ at its $\F_{q^2}$-rational
points.  Hence as in the discussion of the blow-ups of $\proj^2$ above,
and using Example~\ref{Hsurf}, we get a surface with $(q^3+1)(q^2+1)^2$ points.

Similarly the $\overline{X}(s_1,s_2)$ from a group of type
${}^2\!A_4$ is isomorphic to the blow-up of the complete
intersection $Y$ of the two hypersurfaces
\begin{eqnarray}\label{ci}
0 &=& x_0^{q+1} + x_1^{q+1} + \cdots + x_4^{q+1}\\
0 &=& x_0^{q^3+1} + x_1^{q^3+1} + \cdots + x_4^{q^3+1}\nonumber
\end{eqnarray}
in $\proj^4$ at the $(q^5+1)(q^2+1)$ $\F_{q^2}$-rational points on that surface.
(These are the same as the $\F_{q^2}$-rational points on the Hermitian
3-fold in $\proj^4$ defined by the first equation.)
It is easy to check that these points are all singular, and in fact
they blow up to Hermitian curves (not $\proj^1$'s) on the Deligne-Lusztig surface.
Hence the Deligne-Lusztig surface $X$ has a very large number of
$\F_{q^2}$-rational points in this case,
\[
\# X(\F_{q^2}) = (q^5+1)(q^2+1)(q^3+1).
\]
Rodier determines the structure and number of $\F_{q^\delta}$-rational
points in the $ G_2, {}^3\!D_4$, and ${}^2\!F_4$ cases as well.
Interestingly enough, his method is to realize the Deligne-Lusztig
varieties as certain subsets of flag varieties as above, where the
subspaces in the flags are related to each other using the Frobenius
endomorphism.

Rodier and S.H.~Hansen also discuss the properties
of the $C_h(X)$ codes on these varieties.  For instance in \cite{sh01},
Hansen shows the following result by relating codes on $Y$ from (\ref{ci}) and
codes on the Deligne-Lusztig surface itself.

\begin{theorem}\label{DLHansen}
Let $X$ be the Deligne-Lusztig surface of type ${}^2\!A_4$ over
the field $\F_{q^2}$.  For $1 \le h \le q^2$, there exist codes
over $\F_{q^2}$ with 
\begin{eqnarray*}
n &=& (q^5 + 1)(q^3 + 1)(q^2+1),\\
k &=& \binom{4 + h}{h} - \binom{4 + h - (q + 1)}{t - (q + 1)}, \text{and}\\
d &\ge& n - hP(q),
\end{eqnarray*}
where $P(q) = (q^3 + 1)(q^5 + 1) + (q + 1)(q^3 + 1)(q^2 - h + 1)$.
\end{theorem}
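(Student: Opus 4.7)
The plan is to construct codewords by pulling back degree-$h$ forms $f\in\F_{q^2}[x_0,\ldots,x_4]$ via the contraction $\pi\colon X\to Y$, and then apply Theorem~\ref{Hansencurves} with the exceptional divisors of $\pi$ as a covering family. For the block length, every $\F_{q^2}$-rational point of $Y$ is singular and coincides with an $\F_{q^2}$-point of the Hermitian 3-fold $H={\bf V}(F_1)\subset\proj^4$, while $\pi$ is an isomorphism over the smooth locus of $Y$. Hence every $\F_{q^2}$-point of $X$ lies on some exceptional Hermitian curve $E_p$, and summing $q^3+1$ rational points per such curve over the $(q^5+1)(q^2+1)$ rational singularities of $Y$ gives $n=(q^5+1)(q^2+1)(q^3+1)$.

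For the dimension, let ${\mathcal F}$ be the image in $H^0(X,\pi^*{\mathcal O}_{\proj^4}(h))$ of the space of degree-$h$ forms on $\proj^4$. Since $Y$ is the complete intersection of $F_1,F_2$, the kernel of the restriction $H^0(\proj^4,{\mathcal O}(h))\to H^0(Y,{\mathcal O}_Y(h))$ is the degree-$h$ part of the homogeneous ideal $(F_1,F_2)$. The hypothesis $h\le q^2<q^3+1=\deg F_2$ forces $F_2$ not to appear in that degree, so only multiples of $F_1$ contribute, giving kernel dimension $\binom{4+h-(q+1)}{h-(q+1)}$ and thus the claimed $k$. Injectivity of $ev_{\mathcal S}$ on ${\mathcal F}$ is automatic once $d>0$ is established.

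For the minimum distance, apply Theorem~\ref{Hansencurves} on $X$ with the family $\{E_p:p\in Y(\F_{q^2})\}$, so $a=(q^5+1)(q^2+1)$ and $N=q^3+1$. Because $\pi$ contracts each $E_p$ to a single point of $\proj^4$, the line bundle ${\mathcal L}=\pi^*{\mathcal O}_{\proj^4}(h)$ restricts trivially to $E_p$, i.e.\ ${\mathcal L}\cdot E_p=0$ so $\eta=0$; and in a local trivialization of ${\mathcal O}_{\proj^4}(h)$ near $p$, the pull-back $\pi^*f$ is constant along the fiber with value $f(p)$. Thus $Z(\pi^*f)$ contains $E_p$ iff $f(p)=0$, and the parameter $\ell$ of Theorem~\ref{Hansencurves} equals $\#({\bf V}(f)\cap H(\F_{q^2}))$. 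The theorem then yields $d\ge n-(q^3+1)\ell$, so it remains to prove
\[
\ell\le h\bigl[(q^5+1)+(q+1)(q^2-h+1)\bigr].
\]

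The main obstacle is precisely this last bound on the number of $\F_{q^2}$-points of the degree-$h$ hypersurface section ${\bf V}(f)\cap H$. The plan is a second application of Theorem~\ref{Hansencurves}, now on $H$, using a covering family of $\F_{q^2}$-rational lines on the Hermitian 3-fold (each contributing $q^2+1$ points and meeting ${\bf V}(f)$ in at most $h$ points by Bezout, unless fully contained), together with the incidence geometry of isotropic lines for the Hermitian form to control how many lines can be absorbed into ${\bf V}(f)$. The delicate bookkeeping required to produce exactly the two summands $(q^5+1)$ and $(q+1)(q^2-h+1)$ of $P(q)/(h(q^3+1))$ is where the argument is most tightly tailored to the specific Hermitian geometry and will require using the known line count on $H$ rather than a generic projective-variety estimate.
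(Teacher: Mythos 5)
Your computations of $n$ and $k$ are fine (the $t$ in the stated formula for $k$ is a typo for $h$), and your reduction of the distance estimate via the exceptional curves $E_p$ with $\eta=0$ is sound: since $\pi^*{\mathcal O}_{\proj^4}(h)$ is trivial on each $E_p$, everything does come down to showing that a form $f$ of degree $h\le q^2$ not vanishing on $Y$ satisfies $\#\bigl({\bf V}(f)\cap H(\F_{q^2})\bigr)\le h\bigl[(q^5+1)+(q+1)(q^2-h+1)\bigr]$. But that inequality is the entire content of the theorem --- it is exactly what produces $P(q)$ --- and you do not prove it; you only announce a plan. Moreover, the plan as stated cannot work: a ``second application of Theorem~\ref{Hansencurves}'' to $H(\F_{q^2})$ with the family of \emph{all} $\F_{q^2}$-rational lines on the Hermitian threefold is vacuous, because that covering is $(q^3+1)$-fold redundant. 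There are $a=(q^3+1)(q^5+1)$ such lines but only $(q^2+1)(q^5+1)$ rational points, and since $N\ge\eta\ge 1$ the quantity $\ell N+(a-\ell)\eta$ subtracted in Theorem~\ref{Hansencurves} is at least $a$, which already exceeds $\#H(\F_{q^2})$; the resulting lower bound is negative for every $h\ge 1$. To exploit the lines you must instead use the incidence count $(q^3+1)\,\#S=\sum_L\#(S\cap L)$ (each rational point lies on exactly $q^3+1$ lines) \emph{and} prove that at most $h(q+1)(q^3+1)$ of the lines can be contained in ${\bf V}(f)$; that line-count lemma is the genuinely delicate step, it is what yields the second summand of $P(q)$, and it is entirely absent from your proposal.

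For comparison: the survey itself gives no proof of Theorem~\ref{DLHansen}; it cites Hansen \cite{sh01} and remarks only that he relates codes on $Y$ from (\ref{ci}) to codes on the Deligne--Lusztig surface. The shape of $P(q)$ points to a cleaner route than the one you chose. The $\F_{q^2}$-rational lines of the Hermitian threefold in fact lie on $Y$: if $u,v\in\F_{q^2}^{\,5}$ span an isotropic plane, then $u_i^{q^3}=u_i^{q}$ and the same expansion that shows the line lies on the Hermitian hypersurface shows it satisfies the second equation of (\ref{ci}) as well. There are $(q^3+1)(q^5+1)$ such lines, and their strict transforms on $X$ form a covering family passing through each rational point of $X$ exactly once (the $q^3+1$ rational points of $E_p$ correspond to the $q^3+1$ rational lines through $p$), with $N=q^2+1$ rational points apiece and ${\mathcal L}\cdot\tilde L=h\le q^2=N-1$. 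A single application of Theorem~\ref{Hansencurves} to this family gives $d\ge n-h(q^3+1)(q^5+1)-\ell\,(q^2+1-h)$ and reduces the whole theorem to the same line-count bound $\ell\le h(q+1)(q^3+1)$ --- still a lemma that must be proved, but a far more tractable target than the raw point-count bound your argument leaves open.
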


Since $P(q)$ has degree $8$ in $q$, this shows that
$d + k \ge n - O(n^{4/5})$ with $n = O(q^{10})$, some very long
codes indeed!  Hansen also considers the codes obtained from
the singular points on the complete intersection from (\ref{ci})
(that is from the Hermitian 3-fold).

\section{Connections with Other Code Constructions}\label{occ}

In this section we point out some connections between the
construction presented here and some other examples of
algebraic geometric codes related to higher dimensional
varieties in the literature.  There is a close connection
between the codes $C(X,{\mathcal L};{\mathcal S})$  and the \emph{toric codes}
constructed from polytopes or fans in $\mathbb{R}^s$ as in \cite{jh00}.  
A toric variety of dimension $s$
over an algebraically closed field $\F$ is a
variety $X$ containing a Zariski-open subset isomorphic
to the $s$-dimensional algebraic torus
$T \simeq (\F^{\,*})^s$ and on which
$T$ acts in a manner compatible with the multiplicative
group structure on $T$.  The combinatorial data in
a fan $\Sigma$ in $\mathbb{R}^s$ encodes the gluing information
needed to produce a normal toric variety $X_{\Sigma}$ from affine
open subsets of the form ${\rm Spec}(\F[S_\sigma])$
where $\F[S_\sigma]$ is a semigroup
algebra associated to the cone $\sigma$ in the fan $\Sigma$.
A polytope $P$ in $\mathbb{R}^s$ determines a normal fan $\Sigma_P$
and line bundle ${\mathcal L_P}$ on $X_{\Sigma_P}$.
The toric codes are codes $C(X,{\mathcal L}; {\mathcal S})$
for $X = X_{\Sigma_P}$, ${\mathcal L} = {\mathcal L}_P$ and
${\mathcal S} = T \cap \bFq^{\,s} = \left(\bFq^{\,*}\right)^s$.
It is not difficult to see that toric
codes are $s$-dimensional cyclic codes with certain
other properties generalizing those of Reed-Solomon
codes.

The study of decoding algorithms for
one-point algebraic geometric Goppa codes
has been unified and simplified
by the theory of \emph{order domains} discussed
in \cite{hlp98,gp02}.  The article \cite{l07}
shows how order domains can be constructed
from many of the higher dimensional varieties
discussed here.

\section{Code Comparisons}\label{comps}

It is instructive to compare
codes constructed by the methods described here
and the best currently known codes for the same $n,k$. 
We will focus on the minimum distance, although
there are many other considerations too in deciding on codes for
given applications.

All comparisons will be made by means of the online
tables of Markus Grassl, \cite{g08}.  One initial observation
is that many of the varieties $X$ that we have discussed
have \emph{so many} $\bFq$-rational
points that the $n$ achieved are far beyond
the ranges explored to date.  When no explicit codes
are known, it is still possible to make comparisons with general 
bounds.  Since the $k$ for most of the $C_h(X)$ codes
we have seen are much smaller than $n$, the \emph{Griesmer bound} 
yields some information. The usual form of the Griesmer bound 
(see \cite{hp03}) says that for an $[n,k,d]$ code over $\bFq$, 
\[
n \ge \sum_{i=0}^{k-1} \left\lceil \frac{d}{q^i}\right\rceil.
\]
Given $n,k$, this inequality can also be used to derive an upper
bound on realizable $d$ for $[n,k]$ codes that, in a sense, 
improves the Singleton bound $d \le n - k + 1$.  It should
be noted, however, that there are many pairs $n,k$ for which
there are no codes attaining the Griesmer upper bound on $d$.

We begin by noting
the following well-known fact.

\begin{theorem}
The projective Reed-Muller codes with $h = 1$
from Theorem~\ref{RMparams}
attain the Griesmer upper bound for all $m$.
\end{theorem}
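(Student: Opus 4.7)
The plan is to simply substitute the parameters from Theorem~\ref{RMparams} (specialized to $h=1$) into the Griesmer sum and observe that equality holds term by term. First I would record the $h=1$ parameters: $n = q^m + q^{m-1} + \cdots + q + 1$, $k = \binom{m+1}{1} = m+1$, and $d = q \cdot q^{m-1} = q^m$.

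Next I would compute the right-hand side of the Griesmer inequality. Since $d = q^m$ and the sum runs over $i = 0, 1, \ldots, k-1 = m$, each term is
\[
\left\lceil \frac{q^m}{q^i} \right\rceil = q^{m-i},
\]
because $q^{m-i}$ is already an integer for $0 \le i \le m$, so no rounding occurs. Summing gives
\[
\sum_{i=0}^{m} q^{m-i} = q^m + q^{m-1} + \cdots + q + 1 = n.
\]
Thus the Griesmer lower bound evaluates to exactly $n$, so the code meets it with equality.

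There is really no obstacle here; the whole argument is bookkeeping with geometric sums once Theorem~\ref{RMparams} is in hand. The only subtle point to mention is why the ceilings are harmless, which is exactly because the minimum distance $d = q^m$ is a pure power of $q$ with exponent at least as large as any $i$ appearing in the Griesmer sum. This is a special feature of the $h=1$ projective Reed-Muller codes; for larger $h$ the minimum distance $(q+1-h)q^{m-1}$ is not a power of $q$, and the corresponding computation no longer gives equality.
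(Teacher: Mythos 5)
Your computation is correct and matches the paper's own (very brief) justification: plug in $n = q^m + \cdots + q + 1$, $k = m+1$, $d = q^m$ and note that the Griesmer sum telescopes to exactly $n$, with the ceilings vanishing because each $q^m/q^i$ is an integer. This is the same approach as the paper, just written out in more detail.
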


This follows since $n = \# \proj^m(\bFq) = q^m + \cdots + q + 1$, $d = q^m$,
and $k = m + 1$.  

For $h > 1$, however, the presence of \emph{reducible}
forms of degree $h$, which can have many more $\bFq$-rational zeroes than 
irreducible forms (see the proof of Theorem~\ref{RMparams}), tends to
reduce the minimum distance relative to other code constructions.  This 
is true for all $q$, although the difference shows up for smaller $h$ 
the larger $q$ is.  

For instance, in the binary case, the $h = 2$ projective Reed-Muller
code with $m = 5$ has parameters $[63,21,16]$, but there are binary $[63,21,18]$ codes
known by \cite{g08}.  Similarly, with $q = 4$, the $h = 2$ projective Reed-Muller
code with $m = 3$ over $\F_4$ has parameters $[85,10,48]$, but there are
$[85,10,52]$ codes known over $\F_4$ by \cite{g08}.  In the cases that
have been explored in detail, the gap between the projective 
Reed-Muller codes and the best known codes seems to increase with 
$m$ for fixed $h$, and also for $h$ with fixed $m$
(for the cases $h < q + 1$ considered here at least).  

The minimum distance for the $C(X)$ codes from quadrics from (\ref{quadd})
also tend to be relatively close
to the Griesmer bound for their $n,k$, although the bounds grow slightly
faster than the actual $d$ as $m \to \infty$ and 
slightly better codes are known in a number of cases.  
The codes from \emph{elliptic} quadrics ($w = 0$) are superior in general
to those from hyperbolic quadrics ($w = 2$) when $m$ is odd.  This is
an interesting indication that perhaps the ``greedy'' approach of
maximizing $n = \# X(\bFq)$ does not always yield the best codes.

For example, over $\F_8$, the $C(X)$ code from a hyperbolic
quadric in $\proj^3$ has parameters $[81,4,64]$, but there
are $[81,4,68]$ codes known by \cite{g08}.  (The Griesmer
bound in this case gives $d \le 69$.)  By way of contrast,
the $C(X)$ code from an elliptic quadric has parameters
$[65,4,56]$, and this is the best possible by the Griesmer
bound.  Similar patterns hold over all of the small fields
where systematic exploration has been done.  For larger $m$,
however, it is \emph{not} always the case that $C(X)$ codes
from elliptic quadrics meet the Griesmer bound, and there are
slightly better known codes in some cases.   The $C_2(X)$
codes from quadrics seem to be similar, at least in the case
$m = 3$, where the results of Edoukou from \cite{ed08} can be applied.  
Over $\F_8$ for instance, the $C_2(X)$ code from a hyperbolic
quadric surface has parameters $[81,9,49]$, but there are $[81,9,58]$
codes known by \cite{g08}.  On the other hand, the 
$C_2(X)$ code from an elliptic quadric has parameters $[65,9,47]$, 
and this matches the best known $d$ for this $n,k$ over $\F_8$.
(The tightest known upper bound is $d \le 50$.)

The Hermitian hypersurface codes seem to be similar to those
from quadrics.  The $C(X)$ codes are quite good, coming quite
near the Griesmer bound.  For instance, the Hermitian surface
code from Theorem~\ref{hermsurfbound} over $\F_{16}$ has
parameters $[1105,4,1024]$.  This is far outside the range
of $n$ and fields for which tables are available, but
by way of comparison, $d \le 1034$ by the Griesmer bound.  
 However, the $C_2(X)$ codes are 
not as good, and the gap grows with $h$.

The codes from Del Pezzo surfaces from Theorem~\ref{DPsurf} are
interesting only for $\ell = 0$ (the case $X = \proj^2$) and 
$\ell = 6$ (the case of the cubic surface in $\proj^3$).  The
intermediate cases are quite inferior to the best known codes.  

For the other families of varieties  we have considered
(Grassmannians, flag varieties, Deligne-Lusztig varieties), once
$q$ or $m$ get even moderately large, $n$
is so huge that very little is known.  On the basis of rather limited 
evidence, the Grassmannian and flag variety codes might be 
especially good only over very small fields, though.  
For example, the $C(X)$ code from $X = \G(2,4)$ over $\F_2$ has 
parameters $[35,6,16]$, which attains the Griesmer bound.  Over $\F_3$, 
the corresponding Grassmannian code has $[130,6,81]$, but there are 
$[130,6,84]$ codes over $\F_3$ known by \cite{g08} and the Griesmer 
bound gives $d \le 84$ in this case.   

It is unrealistic to expect every code constructed
from a variety of dimension $\ge 2$ to be a world-beater.
The examples here are offered as evidence that
we still do not know how this construction can best be
applied to produce good codes.  
  
\section{Conclusion}

The study of error control codes constructed from higher
dimensional varieties is an area where it is certainly true
that we have just barely begun feeling out the lay of the
land and just barely scratched the surface of what should be 
possible.  If this survey of past work inspires further
exploration, then one of its goals will have been achieved!

\section{Bibliographic Notes}\label{bibnotes}

\noindent
{\it Section~\ref{introsec}.}  The universality of 
the Goppa construction for producing linear codes is
proved in \cite{psw91}; specifically we are referring to
Pellikaan, Shen, and van Wee's result that every linear code is \emph{weakly algebraic-geometric}:
Given $C$, there exists a smooth projective curve $X$, a set ${\mathcal S}$
of $\bFq$-rational points on $X$, and a line bundle ${\mathcal L} = {\mathcal O}(G)$
for some divisor $G$ with support disjoint from ${\mathcal S}$, such that
$C$ is isomorphic to $C(X,{\mathcal L};{\mathcal S})$ (with no restriction on the degree
of $G$).

Although very little work to date has been done on decoding methods, the 
large groups of automorphisms of some of the varieties
considered here make the \emph{permutation decoding} paradigm a 
possibility for certain of these codes.  
Some work along these lines has been done by Kroll and Vincenti,
\cite{kv05,kv08}.
  
\vskip 10pt 
\noindent
{\it Section~\ref{constr}.} Both forms of the construction
of codes from varieties (Definitions \ref{sophis} and \ref{elem})
come from \cite{tv91}, which
was the first place where this idea was described
in published form.  The form in Definition~\ref{elem} can
be made even more concrete and less algebraic-geometric
by the language of {\it projective systems} of points
and their associated codes.

\vskip 10pt
\noindent
{\it Section~\ref{params}.}
Theorem \ref{RMparams} is taken from \cite{l96}.  
It does not include the codes for $h > q$ because
the evaluation mapping is no longer injective in those cases,   
The parameters of the $C_h$ codes for $h > q$ 
have been studied by Lachaud in \cite{l90} and S\o rensen in 
\cite{so91b}.
The generalized Hamming weights $d_r$ for the Reed-Muller codes
have been studied by Heijnen and Pellikaan in \cite{hp98}.
Some ideas about finding good subcodes of the $C_2$ codes
have been presented by Brouwer in \cite{br98}.

Theorem~\ref{Hansencurves}, the following example, and
the bound using Seshadri constants in Theorem~\ref{Sesh} are all due to
S.H.~Hansen and are taken from \cite{sh01}.

The results on bounds for the minimum distance when ${\mathcal S}$
is a complete intersection come from \cite{gls05} and that
article's bibliography gives several sources for the
Cayley-Bacharach theorem and modern generalizations.  
The genesis for this was the 
observation that if ${\mathcal S}$ is a reduced complete
intersection of two cubic curves in $\proj^2$, and $\Gamma'$ is
any subset of eight of the nine points in ${\mathcal S}$, then every
cubic that contains the eight points in $\Gamma'$ also passes
through the ninth point in ${\mathcal S}$.  Related applications to coding theory
were discussed by Duursma, Renteria and Tapia-Recillas
in \cite{drt01} and J.~Hansen in \cite{jh03}.  The theorem stated here
can also be extended to yield a criterion for MDS codes.

The Weil conjectures were originally stated in \cite{weil49}
and proved in complete generality by Deligne in \cite{d74} following
three decades of work by Dwork, Serre, Artin, Grothendieck, Verdier, and many others.
Weil's paper gives a different form for middle Betti number
in (\ref{hyp}), but it can be seen that his form is equivalent to ours.
The discussion of Weil-type bounds follows Lachaud's presentation in \cite{l96}.
Because of space limitations and the significantly higher prerequisites
needed to work with the $\ell$-adic \'etale cohomology theory in any detail
in higher codimension,  we have focused only on the application of Lachaud's
results to codes from hypersurfaces.  The discussion
in \cite{l96} is considerably more general.  Edoukou has verified
S\o rensen's conjecture (see (\ref{Sorconj})) on the Hermitian surface codes
in the case $h = 2$ in \cite{ed07}.

\vskip 10pt
\noindent
{\it Section~\ref{exs}.}  The codes from quadrics have
been intensively studied since at least the 1975 article
\cite{wo75} of Wolfmann.  They are especially accessible
because so much is known about
the sets of $\bFq$-rational points on quadrics as finite
geometries; see Hirschfeld and Thas, \cite{ht91}.  The complete hierarchies
of generalized Hamming weights $d_r$ for the $C(X)$ codes
were determined independently by Nogin in \cite{n93}
and Wan in \cite{wan93}.   To aid in comparing these different
sources, we note that Wan's invariant $\delta$ is related to
Hirschfeld and Thas's (and our) character $w$ by
$\delta = 2 - w$.
The character can also be defined by
$w = 2g - m + 3$ where $g$ is the dimension of the
largest linear subspace of $\proj^m$ contained in
the quadric $X$.  Comparatively little has appeared in
the literature concerning the $C_h(X)$ codes with $h > 1$
on quadrics following the work of Aubry in \cite{a92}.
One recent article studying the $C_2(X)$ codes 
from quadrics in $\proj^3$ is Edoukou,
\cite{ed08}.

Hirschfeld and Thas also contains a wealth of information
related to the codes on Hermitian hypersurfaces.  The parameters
of the $C(X)$ codes were established by Chakravarti in \cite{chak90},
and the generalized Hamming weights were determined in
by Hirschfeld, Tsfasman, and Vladut in \cite{htv94}.

Grassmannian codes were studied first in the binary case
by C.~Ryan and K.~Ryan in \cite{r871,r872,rr90}.
The material on Grassmannian codes presented here
is taken from \cite{n96}.  In that article, Nogin
also determines the complete weight distribution for the codes
$\G(2,m)$ and shows that the generalized weights $d_r$
of the Grassmann codes meet the generalized Griesmer bound
when $r \le \max\{\ell, m - \ell\} + 1$.  More information
on the generalized weights was established by Ghorpade and
Lachaud in \cite{gl00} and
these codes are also discussed as a special case of the
code construction from flag varieties by Rodier in \cite{r03}.  This
article also gives the proof of Theorem~\ref{fvcodes}.
Codes from the Schubert varieties defined in (\ref{schub})
have been studied in \cite{chen00,gv04,gt05}.

The material on Del Pezzo surface codes is taken from
Boguslavsky, \cite{b98}.  That article also determines the complete
hierarchy of generalized Hamming weights $d_r$ for these
codes.

Codes from ruled surfaces were studied by S.H.~Hansen in \cite{sh01}
as an example of how the bound from Theorem~\ref{Hansencurves}
could be applied.  That article also addresses the
cases where the invariant $e < 0$, and presents some
examples involving ruled surfaces over the Hermitian
elliptic curve over $\F_4$.  Codes from ruled surfaces
were also considered in Lomont's thesis, \cite{lom03}.
The results for codes over ruled
surfaces have been generalized to give corresponding
results for codes on projective bundles $\proj({\mathcal E})$
for ${\mathcal E}$ of all ranks $r \ge 2$ by Nakashima in \cite{n06}.
Nakashima also considers codes on Grassmann, quadric, and 
Hermitian bundles in \cite{n05}.

Other work on codes from algebraic surfaces is contained in 
the Ph.D. theses of Lomont, \cite{lom03}, and Davis, \cite{d07}.
In addition, the unpublished preprint \cite{vz} of Voloch
and Zarzar and the article \cite{z07} adopt the interesting
approach of trying to find good surfaces for constructing
codes by limiting the presence of reducible hyperplane
sections through controlling the rank of the N\'eron-Severi group.

\vskip 10pt
\noindent
{\it Section~\ref{DLvars}}.  
Rodier's article \cite{r00} is a gold mine of information
and techniques for the Deligne-Lusztig surfaces and
Deligne-Lusztig varieties more generally.  The original
article of Deligne and Lusztig and a number
of other works devoted to this construction are referenced in the bibliography.  
The Picard group and other aspects of the finer structure
of Deligne-Lusztig varieties have been studied by
S.H.~Hansen in \cite{sh99a,sh99b,sh01}.  Hansen's thesis, \cite{sh99a}
contains chapters corresponding to the other articles here.

\vskip 10pt
\noindent
{\it Section~\ref{occ}.}  A standard reference
for the theory of toric varieties over $\C$ is Fulton's text,
\cite{f93}; the construction generalizes to fields
of characteristic $p$ with no difficulty.

\end{document}